\newtheorem{theorem}{Theorem}[section]
\newtheorem{corollary}[theorem]{Corollary}
\newtheorem{definition}[theorem]{Definition}
\newtheorem{lemma}[theorem]{Lemma}
\newtheorem*{theorem*}{Theorem}
\newtheorem*{corollary*}{Corollary}
\newtheorem*{lemma*}{Lemma}
\newtheorem*{observation*}{Observation}
\begin{document}

\title{Distributed Deterministic Exact Minimum Weight Cycle and Multi Source Shortest Paths in Near Linear Rounds in CONGEST model}

\author{Udit Agarwal}

\maketitle

\begin{abstract}
We present new deterministic algorithms for computing distributed weighted minimum weight cycle (MWC) in  
undirected and directed graphs and distributed weighted all nodes shortest cycle (ANSC) in directed graphs.
Our algorithms for these problems run in $\tilde{O}(n)$ rounds in the CONGEST model on graphs with arbitrary 
non-negative edge 
weights, matching the lower bound up to polylogarithmic factors.
Before our work, no near linear rounds deterministic algorithms were known for these problems.
The previous best bound for solving these problems deterministically requires an initial computation of all pairs
shortest paths (APSP) on the given graph, followed by post-processing of $O(n)$ rounds,
and in total takes $\tilde{O}(n^{4/3})$ rounds, using deterministic APSP~\cite{AR20}.

The main component of our new $\tilde{O}(n)$ rounds 
algorithms is a deterministic technique for constructing a sequence of successive 
blocker sets.
These blocker sets are then treated as source nodes to compute $h$-hop shortest paths, which can then be used to 
compute candidate shortest cycles whose hop length lies in a particular range.
The shortest cycles can then be obtained by selecting the cycle with the minimum weight from all these 
candidate cycles.

Additionally using the above blocker set sequence technique, 
we also obtain $\tilde{O}(n)$ rounds deterministic algorithm
for the multi-source shortest paths problem (MSSP) for both
directed and undirected graphs, given that the size of the 
source set is at most $\sqrt{n}$.
This new result for MSSP can be a step towards obtaining a $o(n^{4/3})$ rounds algorithm for
deterministic APSP.
We also believe that our new blocker set sequence
technique may have potential applications for other distributed algorithms.

\end{abstract}

\section{Introduction}
\subsection{Background and Motivation}

Finding the minimum weight cycle (MWC) 
in a graph is one of the fundamental problems in the area of graph
algorithms and has received a huge amount of attention for the past several decades in the context of 
centralized algorithms (e.g., ~\cite{C15, D20, K22, L72, L18, L09, orlin2017nm, R13, RW11, R12}).
The related problem of finding the shortest cycles for each node in a graph,
known as All Nodes Shortest Cycle (ANSC), is relatively less-studied and 
has only gained attention in the past decade~\cite{A18,D22, S19, Y11}.

In this work, we study the computation of MWC in the CONGEST model of 
distributed computing.
In the CONGEST model, the input is a directed (or undirected) graph $G = (V, E)$, and the distributed 
computation occurs at the nodes in this graph.
The output of MWC problem is to compute the weight of the minimum
weight cycle in the graph $G$.
We assume an arbitrary non-negative edge weight on each edge.
In this paper, we consider the computation of the exact shortest cycles  and paths in the graph.

\subsection{Our Results}

Table~\ref{table-results} lists the comparison of our new results with the earlier 
known deterministic results for the shortest cycle problems, including MWC and ANSC, and
the multi-source shortest paths (MSSP) problem.
We now explain our results in more detail.

\begin{table}
\centering
\caption{Table comparing our new results with previous known deterministic results.} \label{table-results}
\def\arraystretch{1.5}
{
\begin{tabular}{| c | c | c |}
\hline
{\bf Problem} & {\bf Previous Bound} & {\bf This Paper} \\
 \hline
Undirected/ & & \\
Directed MWC & $\tilde{O}(n^{4/3})$~\cite{AR20} & $\bm{\tilde{O}(n)}$ \\
\hline
Directed ANSC & $\tilde{O}(n^{4/3})$~\cite{AR20} & $\bm{\tilde{O}(n)}$ \\
\hline  
Undirected/ &  &  \\
Directed MSSP & $\tilde{O}(\min \{n^{4/3}, n\cdot |S|\})$~\cite{AR20, B58} & $\bm{\tilde{O}(n)}$ \\
\hline
\end{tabular}}
\end{table}

\subsubsection{Directed Shortest Cycles}
In a directed graph, the problem of computing the shortest cycle at a node $v$ involves finding 
the cycle with minimum weight among all cycles that contain node $v$.
The ANSC problem deals with finding the shortest cycle for each node $v$
in the graph, whereas MWC deals with finding the global shortest cycle in the 
graph.
The current known approach for computing shortest cycles involves first computing 
APSP and then for each node $v$, 
taking the minimum across all possible values 
$\{w(v,u) + \delta(u,v) \}$ where $u$ is an outgoing neighbor of $v$.
This gives the value of the shortest cycle containing node $v$.
(Here $\delta(u,v)$ refers to the weight of the shortest path from node $u$ to $v$)
This approach takes $O(APSP + n)$ rounds in total and is also used in Manoharan and Ramachandran~\cite{M22} to obtain
randomized algorithms for shortest cycle problems.
Since the current best bound for computing deterministic APSP is $\tilde{O}(n^{4/3})$~\cite{AR20},
this approach for computing shortest cycles requires 
$\tilde{O}(n^{4/3})$ rounds in total.

In Section~\ref{sec:framework}, we give a brief overview of our new framework
that involves computing a sequence of 
blocker sets, which then allows us to compute shortest cycles, 
without requiring an initial computation of APSP.
This, in effect, allows us to obtain $\tilde{O}(n)$ rounds algorithms for the shortest cycle problems.

\subsubsection{Undirected Shortest Cycles}  \label{sec:intro:undirected}
The problem of computing shortest cycles in undirected graphs is known to be much harder as 
compared to the directed setting. 
This has also been noted in an earlier work on MWC by Roditty and 
Williams~\cite{RW11}.
The reason behind this is that the reduction from the Shortest Cycle problems to APSP no longer works 
in the undirected setting:
an edge $(v,u)$ might also be the shortest path from $u$ to $v$, and $\min_{u} \{w(v,u) + \delta(u,v)\}$
might be $2\cdot w(v,u)$ and not the weight of the shortest cycle containing node $v$.

Instead in undirected graphs, the value of the minimum weight cycle can be obtained by taking the 
minimum across all possible values $\{w(v,u) + \delta(u,w) + \delta(v,w) \}$ 
where $u \in \mathcal{N}(v)$ and $w \in V$.
This requires an initial computation of APSP and $O(n)$ additional rounds for each node $v$ to 
communicate distance values $\delta(v,w)$, where $w \in V$, to all its neighbor nodes, thus
taking $O(APSP + n)$ rounds in total.
Since the current deterministic bound for APSP is $
\tilde{O}(n^{4/3})$~\cite{AR20}, this approach will also 
require $\tilde{O}(n^{4/3})$ for computing deterministic MWC.
We give a brief overview of our framework in Section~\ref{sec:framework} and a detailed description
of our $\tilde{O}(n)$ deterministic algorithm for computing undirected MWC in 
Section~\ref{sec:undirected}.

Our result provides a temporary polynomial gap between the 
current deterministic complexity of 
APSP and MWC computation in both directed
and undirected graphs.
All previously known deterministic results for these problems exhibit the 
same round complexity
of $\tilde{O}(n^{4/3})$~\cite{AR20}.

\subsubsection{Multi Source Shortest Paths (MSSP).}
The problem of computing MSSP
involves computing shortest path distances from all source 
nodes in a given source set $S$ to every node $v$ in the 
graph. 
The current approach to compute MSSP involves either computing
APSP or running SSSP using Bellman-Ford algorithm~\cite{B58}
for every node in the source set $S$.
This results in the total round bound of
$\tilde{O}(\min \{n^{4/3}, n\cdot |S|\})$.
In Section~\ref{sec:MSSP}, we describe our $\tilde{O}(n)$
rounds deterministic algorithm for computing MSSP, when there
are at most $\sqrt{n}$ source nodes.
Note that our algorithm works for both directed and undirected graphs.

\subsection{Our Framework for Computing Shortest Cycles} \label{sec:framework}

Our $\tilde{O}(n)$ deterministic algorithms for computing shortest cycles follows the following 
framework for a given graph $G$ (directed or undirected):

\begin{enumerate}
    \item Compute a series of blocker sets $Q_i$, for $1 \leq i \leq \lceil \frac{\log n}{\log \log n} \rceil - 1$, for $h_i$ hops, with source set $Q_{i-1}$, such that $Q_i$ intersects all exact shortest paths from source set $Q_{i-1}$ with hop-length $h_{i-1}$. Here $Q_{0} = V$, $h_{0} = \log^{2} n$ and $h_i = h_{i-1}\cdot \log n$.
    \item For each $i$, $1 \leq i \leq \lceil \frac{\log n}{\log \log n} \rceil - 1$, compute $h_{i}$-hop incoming shortest paths and $h_i$-hop outgoing shortest paths for source set $Q_i$. 
    \item Determine the weight of the shortest cycles using the values computed in Step (2).
\end{enumerate}

We now outline the ideas, followed by an informal description of each of the steps in our above
framework for computing shortest cycles.

\subsubsection{Deterministic Algorithm for Computing a Sequence of Blocker Sets (Step 1).}
For a given source set $S$ and hop-length $h$, 
Agarwal and Ramachandran~\cite{AR20} gave a deterministic algorithm that
computes a blocker set $Q$ for $S$ in $\tilde{O}(|S|\cdot h)$ rounds. 
Their algorithm involves framing the computation of a small blocker set as an approximate set cover
problem on a related hypergraph, and then adapting the efficient NC algorithm in Berger, Rompel and Shor~\cite{B94}
for computing an approximate minimum set cover problem in a hypergraph to an $\tilde{O}(|S|\cdot h)$
rounds randomized algorithm for computing the blocker set.
They then derandomize this algorithm to obtain an overall $\tilde{O}(|S|\cdot h)$ deterministic 
algorithm.

Our new contribution is to construct a sequence of blocker sets $Q_i$, 
for $1 \leq i \leq \lceil \frac{\log n}{\log \log n} \rceil - 1$,
such that nodes in $Q_i$ intersect all exact shortest paths from source nodes in $Q_{i-1}$
with hop-length $h_{i-1}$.
We use the deterministic blocker set algorithm in
Agarwal and Ramachandran~\cite{AR20} to compute these 
individual $Q_i$'s.
We describe this step in more detail in Section~\ref{sec:blocker} where we also establish 
additional properties that such a sequence of blocker sets guarantees, which then allows us to 
obtain our deterministic algorithms for computing shortest cycles.

\subsubsection{Computing Small-hop Shortest Paths from Blocker Sets \texorpdfstring{$Q_i$} (Step 2).}
After computing the sequence of blocker sets $Q_i$, for 
$1 \leq i \leq \lceil \frac{\log n}{\log \log n} \rceil - 1$, in Step (1), In Step (2) we compute
the $h_{i}$-hop outgoing and $h_{i}$-hop incoming shortest paths from 
source nodes in sets $Q_i$.
This takes $O(\sum_{i=0}^{\lceil \frac{\log n}{\log \log n} \rceil - 1} |Q_i|\cdot h_{i+1})$ rounds in total
using the Bellman-Ford algorithm~\cite{B58}.
We describe this further in Section~\ref{sec:prelim}.

\subsubsection{Computing the weight of Shortest Cycles (Step 3).}
Shortest cycles in both directed and undirected graphs exhibit the following
property: they are composed of one or more shortest paths.
In  directed graphs, the shortest cycle containing node $v$ consists of an
outgoing edge $(v,u)$, followed by the shortest path from $u$ to $v$.
Similarly, in undirected graphs, the shortest cycle for node $v$ consists of a 
shortest path from $v$ to some node $u$, followed by an edge $(u,w)$, and then
a shortest path from $w$ to node $v$.
We exploit this property of the shortest cycles, along with the small-hops
shortest path values obtained in Step 2, to obtain the weight of the shortest
cycles in both directed and undirected graphs.
We describe this in more detail in Sections~\ref{sec:directed} and
Section~\ref{sec:undirected}.

\subsection{Related Work}

\subsubsection{Deterministic Shortest Paths in CONGEST model.}
The classical Bellman-Ford algorithm~\cite{B58} is still the current best approach for
deterministically computing the single-source version of shortest paths (SSSP), even if 
the diameter of the graph is sublinear in the number of nodes and has
the round complexity of $O(n)$.

For the all-pairs version (APSP), the first non-trivial deterministic algorithm for this
problem was proposed by Agarwal et al.~\cite{ARKP18} and had the round bound complexity of
$\tilde{O}(n^{3/2})$.
This result was later improved by Agarwal and Ramachandran~\cite{AR20}, with a total
round complexity of $\tilde{O}(n^{4/3})$.

\subsubsection{Randomized Shortest Cycles in CONGEST model.}
A trivial approach to computing shortest cycles is to first 
compute APSP on the graph and then use the distance
values to compute the weight of the shortest cycles.
Such an approach has been used in
Manoharan and Ramachandran~\cite{M22}, 
where the authors use the 
randomized $\tilde{O}(n)$ rounds APSP algorithm of
Bernstein and Nanongkai~\cite{B21} to obtain 
randomized algorithms for shortest cycle problems.

One can argue whether it is possible to use such an approach to 
obtain $\tilde{O}(n)$ rounds deterministic algorithms for computing shortest cycles.
However, such an approach would require first using deterministic APSP to compute
the distance values, which are required to obtain the shortest cycles
algorithms in ~\cite{M22}.
Since the current best bound for deterministic APSP is 
$\tilde{O}(n^{4/3})$~\cite{AR20}, 
such an approach would 
require at least $\tilde{O}(n^{4/3})$ rounds.

\subsection{Organization of the Paper}

In Section~\ref{sec:prelim} we give an overview of the well-known CONGEST model
and the Bellman-Ford algorithm for computing small hop shortest paths.
We describe our deterministic algorithm for computing a sequence of 
successive blocker sets in Section~\ref{sec:blocker}.
In Section~\ref{sec:directed}, we describe our $\tilde{O}(n)$ deterministic 
algorithm for computing
minimum weight cycle in directed graphs, and in Section~\ref{sec:undirected}
we describe our $\tilde{O}(n)$ deterministic 
algorithm for computing
minimum weight cycle in undirected graphs.
Finally in Section~\ref{sec:MSSP}, we describe our 
$\tilde{O}(n)$ rounds deterministic algorithm for computing 
multi-source shortest paths.

\section{Preliminaries} \label{sec:prelim}

\subsection{CONGEST model}  

We now describe the well-known CONGEST model for distributed computation.
In this model, $n$ independent processors are interconnected in a network
by bounded-bandwidth links. 
We refer to these processors as nodes and the links as edges. 
This network is modeled by a graph $G = (V, E)$ where $V$ refer to the set of 
processors and $E$ refer to the set of links between the processors. 
The model also assumes that the system is fault-free, i.e. nodes do not crash or
behave badly and messages are not lost, delayed, or corrupted.

The processors (or nodes) are assumed to have unique IDs in the range of
$\{1, \ldots, poly(n) \}$ and have infinite computational power available
on each node.
Each node has limited topological knowledge and only knows about its incident
edges.
Each edge has an arbitrary real non-negative edge weight.
The communication network of $G$ is represented by the underlying undirected
graph $U_G$ of $G$.
The communication channels across all edges are bidirectional, even if the 
edges are directed.
Note that the model allows a node to send different messages along different edges.

The computation in this model proceeds in synchronous rounds.
In each round,
each node $v \in V$ can send a constant number of words along each outgoing edge, and it 
receives the messages sent to it in the previous round.
The CONGEST model normally assumes that a word has $O(\log n)$ bits.
Since we allow arbitrary edge-weights, here we assume that a constant number of node-ids, 
edge-weights, and distance values can be sent along every edge in every round (similar assumptions
are made in~\cite{AR20, ARKP18, B21, E20}).
After receiving
the messages, every node performs some local computation.
We measure the performance of algorithms in terms of their round complexity, 
defined as the worst-case number of rounds of distributed communication.

\subsection{Notations}

Let $G = (V, E)$ be a directed or undirected graph with arbitrary non-negative 
weights: $V$ is the set of nodes, and $E$ is the set of edges, such that 
$|V| = n$ and $|E| = m$. 
Let $\mathcal{N}_{out}(v)$ refer to the set of out neighbors of 
node $v$ and let $\mathcal{N}_{in}(v)$ refer to the set of in
neighbors of node $v$.
In undirected graphs, we simply denote the neighbors of node $v$
with $\mathcal{N}(v)$, since edges are not directed.

Let $w(u, v)$ denote the weight of the edge $(u,v)$
and let $\delta(s,t)$ refer to the weight of the shortest path 
from node $s \in V$ to $t \in V$.
For a given hop-length $h$, the value $\delta^{h}(s,t)$ refer to the weight
of the shortest path from $s$ to $t$ among all such paths from $s$ to $t$, that
has at most $h$ hops.
Let $\delta (v)$ denote the weight of the shortest cycle containing node $v$ and
let $\delta^{h}(v)$ denote the weight of the shortest cycle containing node
$v$, that has at most $h$ hops.
Let $wt(\mathcal{P})$ refer to the weight of a path $\mathcal{P}$ and let 
$wt(\mathcal{C})$ refer to the weight of a cycle $\mathcal{C}$.

Let $parent(s,t)$ refer to the 
parent of node $t$ on the computed shortest 
path between nodes $s$ and $t$ and let $parent^{h}(s,t)$ refer to 
the parent of node $t$ on the computed $h$-hop shortest 
path between nodes $s$ and $t$.
Similarly let $next(s,t)$ refer to the 
node next to $s$, which lies on the computed shortest 
path between nodes $s$ and $t$ and let $next^{h}(s,t)$ refer to 
the node next to $s$, which lies on the computed $h$-hop shortest 
path between nodes $s$ and $t$.

\subsection{Small Hops Shortest Paths}  \label{sec:small}

We now define the notion of $h$-hops shortest paths and show how traditional
Bellman-Ford algorithm~\cite{B58} is used to compute these $h$-hops 
shortest paths.
We start with defining the notion of $h$-hop-accurate shortest path distances
defined in~\cite{B21}.

\begin{definition}[$h$-hop accurate~\cite{B21}]
For an integer $h$ and a source node $s \in V$ and a target node $t \in V$, a distance value $\delta^{h}(s,t)$ is $h$-hop accurate if the following holds: (1) $\delta^{h}(s,t) \geq \delta(s,t)$ and, (2) if there is a shortest path from $s$ to $t$ of hop-length $\leq h$, then $\delta^{h}(s,t) = \delta(s,t)$.
\end{definition}

\subsubsection{Bellman Ford Algorithm.}
We now show how the Bellman-Ford algorithm computes these $h$-hop accurate shortest
path distance values from a given source node $s$.
The algorithm runs for $h$ rounds, where $h$ is an input given by the user.
Here is a brief description of the algorithm.

\begin{algorithm}
\caption{Computing $h$-hop outgoing shortest path distances using Bellman-Ford algorithm~\cite{B58}}
\begin{algorithmic}[1]
\Statex \textbf{Input:} $G = (V, E)$; \hspace{5pt} source : $s$; \hspace{5pt} number of hops: $h$
\State \textbf{For $\bm{0\leq i \leq h}$:} Initialize $D^{i}(s,t) = \infty$ for $t \neq s$; \hspace{5pt} Set $D^{i}(s,s) = 0$.
\State \textbf{For $\bm{0\leq i \leq h}$:} Initialize $parent^{h}(s,t) \leftarrow t$ for $t \neq s$; \hspace{5pt} Set $parent^{h}(s,s) \leftarrow s$
\State \textbf{Round $\bm{i}$:} ($0\leq i \leq h-1$)
\State \hspace{10pt} Every node $v$ sends the computed distance value $D^{i}(s,v)$ to all its outgoing neighbors.    \label{algBF:send}
\State \hspace{10pt} $\begin{aligned}[t] 
    & \text{Using the values received in Step~\ref{algBF:send}, node $v$ computes the distance value} \\
    & \text{$D^{i+1}(s,v) \leftarrow \min_{u \in \mathcal{N}_{in}(v)} \{D^{i}(s,u) + w(u,v)\}$.} 
    \end{aligned}$
\State \hspace{10pt} \textbf{if $D^{i+1}(s,v) \neq D^{i}(s,v)$ then:} set $parent^{h}(s,v) \leftarrow \arg \min_{u \in \mathcal{N}_{in}(v)} \{D^{i}(s,u) + w(u,v)\}$
\end{algorithmic}  \label{algBF}
\end{algorithm} 

Clearly, Algorithm~\ref{algBF} takes $O(h)$ rounds.
Moreover, it can be proved that when the algorithm terminates the computed
$h$-hop
shortest path distance values, $D^h (s,v)$, is equal to  $\delta^{h}(s,v)$, and
is $h$-hop-accurate for every 
node $v \in V$.
Note that we can modify the above algorithm if we instead want to compute
$h$-hop incoming shortest path distance values
$\delta^{h}(v,s)$ at each node $v \in V$.
We describe this modified algorithm
in Algorithm~\ref{algBFIncoming}, that computes 
$\delta^{h}(v,s)$ distance values at each node $v \in V$, and also takes 
$O(h)$ rounds.

\begin{algorithm}
\caption{Computing $h$-hop incoming shortest path distances using Bellman-Ford algorithm~\cite{B58}}
\begin{algorithmic}[1]
\Statex \textbf{Input:} $G = (V, E)$; \hspace{5pt} source : $s$; \hspace{5pt} number of hops: $h$
\State \textbf{For $\bm{0\leq i \leq h}$:} Initialize $D^{i}(t,s) = \infty$ for $t \neq s$; \hspace{5pt}  Set $D^{i}(s,s) = 0$.
\State \textbf{For $\bm{0\leq i \leq h}$:} Initialize $next^{h}(t,s) \leftarrow t$ for $t \neq s$;  \hspace{5pt} Set $next^{h}(s,s) \leftarrow s$
\State \textbf{Round $\bm{i}$:} ($0\leq i \leq h-1$)
\State \hspace{10pt} Every node $v$ sends the distance value $D^{i}(v,s)$ to all its incoming neighbors.    \label{algBFIncoming:send}
\State \hspace{10pt} $\begin{aligned}[t] 
    & \text{Using the values received in Step~\ref{algBF:send}, node $v$ computes the distance value} \\
    & \text{$D^{i+1}(v,s) \leftarrow \min_{u \in \mathcal{N}_{out}(v)} \{w(v,u) + D^{i}(u,s)\}$.} 
    \end{aligned}$
\State \hspace{10pt} \textbf{if $D^{i+1}(v,s) \neq D^{i}(v,s)$ then:} set $next^{h}(v,s) \leftarrow \arg \min_{u \in \mathcal{N}_{out}(v)} \{w(v,u) + D^{i}(u,s)\}$
\end{algorithmic}  \label{algBFIncoming}
\end{algorithm} 

\subsection{Blocker Sets}

We now describe the notion of blocker sets from
King~\cite{K99} where it was used 
in the dynamic setting to compute fully dynamic APSP.
This technique was then first adapted to the distributed CONGEST model 
in~\cite{ARKP18} and later improved to run with almost optimal bounds 
in~\cite{AR20}.
The blocker sets, in essence, is a deterministic tool to compute Set Cover 
deterministically, when the underlying set is a collection of paths. 
We now define the notion of \textit{$h$-hop Consistent SSSP Collection}
(or $h$-CSSSP) which is used in~\cite{AR20} to compute blocker sets
efficiently.

\begin{definition}[$h$-hop CSSSP~\cite{A19}]
Let $\mathcal{T}$ be a collection of trees of depth $h$ for a set of sources 
$S \subseteq V$. Then $\mathcal{T}$ is an $h$-hop CSSSP collection if for 
every $u,v \in V$ the path from $u$ to $v$ is the same in each of the trees in 
$\mathcal{T}$ (if such a path exists), including 
the $h$-hop tree $T_u$ rooted at $u$. 
Further, each $T_u$ contains every vertex $v$ that has a path with at most $h$ 
hops from $u$ in $G$ that has shortest path distance $\delta(u, v)$.
\end{definition}

Using the Bellman-Ford algorithm described in Section~\ref{sec:small},
one can compute $h$-hop CSSSP for a source set $S$ and hop-length $h$ in
$O(|S|\cdot h)$ rounds.
The advantage of using $h$-CSSSP instead of a collection of $h$-hop shortest
paths is that it guarantees that all paths rooted at any vertex $v$ 
constitutes a rooted tree at $v$, which is not guaranteed if we only use
the collection of $h$-hop shortest paths.
This is described in detail in~\cite{A19}.

\begin{lemma}[~\cite{A19, B58}] \label{lemma:hCSSSP}
    For a source set $S$ and hop-length $h$, $h$-hop CSSSP can be computed in $O(|S|\cdot h)$ rounds.
\end{lemma}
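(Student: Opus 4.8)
The plan is to build the collection $\mathcal{T} = \{T_s : s \in S\}$ by running the Bellman-Ford procedure of Section~\ref{sec:small} (Algorithm~\ref{algBF}) once from each source $s \in S$, each run lasting exactly $h$ rounds, and executing these $|S|$ runs one after another. A single run takes $O(h)$ rounds and there are $|S|$ of them, so the total is $O(|S|\cdot h)$ rounds; since at most one run occupies an edge at any time, the $O(1)$-words-per-edge bandwidth restriction of the CONGEST model is respected throughout. It then remains to argue that the parent pointers produced satisfy the two requirements of an $h$-hop CSSSP collection: (i) each $T_s$ contains every $v$ having a path of at most $h$ hops from $s$ whose weight equals $\delta(s,v)$, and (ii) for every $u,v \in V$ the $u$-to-$v$ path is identical in every tree of $\mathcal{T}$ that contains it, and in particular equals the $u$-to-$v$ path in $T_u$.

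Requirement (i) is precisely the $h$-hop-accuracy guarantee of Bellman-Ford recalled in Section~\ref{sec:small}: after $h$ rounds $D^h(s,v) = \delta^h(s,v)$, and whenever a shortest $s$-to-$v$ path with at most $h$ hops exists this equals $\delta(s,v)$, so the parent chain from $v$ traces such a path within $T_s$. For requirement (ii) I would first make all shortest paths of $G$ unique using a single globally fixed, source-independent tie-breaking rule: fix an arbitrary total order on the edges and compare two candidate relaxed values lexicographically by the pair (weight, sequence of edge-ranks along the path) --- equivalently, perturb each edge weight by a distinct, sufficiently small amount consistent with that order. Only a constant number of words of tie-breaking data need accompany each distance value in a Bellman-Ford message, so the per-run cost stays $O(h)$. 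Once shortest paths are unique, subpath optimality finishes the argument: if the $u$-to-$v$ portion of the path that $T_s$ records from $s$ to $v$ is itself a shortest $u$-to-$v$ path then it is \emph{the} shortest $u$-to-$v$ path, and since a subpath of an at-most-$h$-hop path again has at most $h$ hops, the run rooted at $u$ discovers exactly this path and stores it in $T_u$; hence all trees of $\mathcal{T}$ agree on it.

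The step I expect to require the most care is reconciling the hop bound and the tie-breaking with the behavior of parent pointers in requirement (ii). One must check that, following the fixed lexicographic rule at every relaxation step, a hop-bounded run from $s$ actually converges on the canonical unique shortest path to $v$ (and not on another path of equal unperturbed weight but different edges or more hops): this holds because every prefix of the canonical $s$-to-$v$ path is again canonical for its own endpoint, so the greedy relaxation is consistent hop-by-hop, but it is the point that must be spelled out. Relatedly, because every run executes the full $h$ rounds, one must verify that a parent pointer set early is never overwritten later by a different path of the same weight --- uniqueness after perturbation is exactly what rules this out, and it is the reason a global perturbation is used rather than an ad hoc per-node tie-break. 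Beyond these points the proof is a direct appeal to the properties of Bellman-Ford already established in Section~\ref{sec:small} together with the $h$-CSSSP construction of~\cite{A19}.
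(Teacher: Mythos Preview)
The paper does not actually prove this lemma; it is stated with citations to~\cite{A19,B58}, and the only in-text justification is the sentence immediately preceding it, which asserts that running the Bellman-Ford algorithm of Section~\ref{sec:small} from each source yields the $h$-hop CSSSP in $O(|S|\cdot h)$ rounds. Your proposal --- sequential $h$-round Bellman-Ford from each $s\in S$, together with a global, source-independent lexicographic tie-break to make shortest paths unique --- is exactly the construction underlying the cited result, so you are aligned with what the paper invokes rather than taking a different route. The round bound is immediate, as you say.

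The one place your sketch is thinner than the full argument in~\cite{A19} is the consistency step (your requirement~(ii)) in the hop-bounded regime. Your subpath-optimality argument is clean when the $s$-to-$v$ path recorded in $T_s$ is a \emph{true} shortest path in $G$: then its $u$-to-$v$ subpath is the unique shortest $u$-to-$v$ path under the perturbation, and every tree must agree on it. But $h$-round Bellman-Ford also sets parent pointers for vertices $v$ with $\delta^{h}(s,v)>\delta(s,v)$; for such $v$ the recorded path is only $h$-hop-optimal, its subpaths need not be globally shortest, and two trees with different residual hop budgets at $u$ can record different $u$-to-$v$ subpaths. The CSSSP construction in~\cite{A19} handles this (essentially by ensuring each $T_s$ is populated only by vertices whose $h$-hop distance realizes the true distance, which is all the definition requires $T_s$ to contain). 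You correctly flag the tie-breaking subtlety, but this pruning/hop-budget issue is the additional point that would need to be made explicit to turn the sketch into a complete proof.
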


We now define the notion of blocker set from
King~\cite{K99} which was later adapted 
to the CONGEST model in~\cite{ARKP18}.

\begin{definition}[Blocker Set~\cite{K99, ARKP18}]  \label{def:blocker}
    Let $H$ be a collection of rooted $h$-hop trees for a source set 
    $S \subseteq V$. A set $Q \subseteq V$ is a blocker set for paths in the set $H$ if every root-to-leaf path of length $h$ in every tree in $H$ contains a node in set $Q$.
\end{definition}

We use the following result from Agarwal and Ramachandran~\cite{AR20} which we use in our 
algorithm to compute a desired sequence of 
successive blocker sets described in Section~\ref{sec:blocker}.

\begin{lemma}[~\cite{AR20}]
    For a source set $S \subseteq V$ and hop-length $h$ and given its 
    corresponding $h$-CSSSP collection $\mathcal{T}$, a blocker set $Q$
    for the collection of paths in $\mathcal{T}$ can be computed 
    deterministically in $\tilde{O}(|S|\cdot h/(\epsilon^2\delta^3))$
    rounds, where both $\epsilon$ and $\delta$ are small positive constants 
    with a value less than or equal to $1/12$.
        Also, the blocker set $Q$ has size $O(n\log n /h)$.
\end{lemma}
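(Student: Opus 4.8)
The plan is to recast the construction of $Q$ as a hitting-set problem on a hypergraph, establish by a probabilistic argument that a hitting set of the claimed size exists, and then obtain one deterministically by derandomizing a Berger--Rompel--Shor--style iterative set-cover procedure, implemented round-efficiently in CONGEST using the tree collection $\mathcal{T}$. First I would set up the hypergraph: let $\mathcal{P}$ be the family of all root-to-leaf paths of length exactly $h$ (i.e.\ $h$ edges, $h+1$ vertices) appearing in the trees of $\mathcal{T}$. Since $\mathcal{T}$ has $|S|$ trees, each with at most $n$ leaves, $|\mathcal{P}| \le n|S| \le n^2$. By Definition~\ref{def:blocker}, a blocker set is exactly a vertex subset $Q$ meeting every $P \in \mathcal{P}$; dually, each vertex $v$ ``covers'' the subfamily $\mathcal{P}_v = \{P \in \mathcal{P}: v \in P\}$ and we seek few vertices covering all of $\mathcal{P}$.

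Next I would show a small blocker exists: include each $v\in V$ independently with probability $p = c\ln n/h$. For a fixed $P\in\mathcal{P}$, $\Pr[P\text{ unhit}] = (1-p)^{h} \le e^{-ph} = n^{-c}$, so a union bound over $|\mathcal{P}|\le n^2$ paths makes the sampled set a blocker with probability $\ge 1 - n^{2-c}$, while a Chernoff bound keeps its size $O(n\log n/h)$. Hence the target size $O(n\log n/h)$ is information-theoretically attainable; the real work is to produce such a $Q$ deterministically in $\tilde O(|S|\cdot h)$ rounds.

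For the deterministic construction I would run the selection in $O(\log n)$ phases, maintaining the subfamily $\mathcal{P}^{(j)}\subseteq\mathcal{P}$ of not-yet-hit length-$h$ paths. In phase $j$ one conceptually forms a random batch $B_j$ by sampling each vertex with probability $\Theta(1/h)$, so $\mathbb{E}[|B_j|] = \Theta(n/h)$ and, for each surviving path, the chance it stays unhit is $(1-\Theta(1/h))^{h}=\Theta(1)$ bounded away from $1$, giving $\mathbb{E}[|\mathcal{P}^{(j+1)}|]\le \gamma|\mathcal{P}^{(j)}|$ for a constant $\gamma<1$. One then replaces full independence by a seed of $O(\log n)$ bits drawn from a $\mathrm{poly}(n)$-size space carrying enough (e.g.\ pairwise) independence for a second-moment bound, and fixes the seed by the method of conditional expectations over those $O(\log n)$ bits, choosing it so that simultaneously $|B_j| = O(n/h)$ and $|\mathcal{P}^{(j+1)}| \le \gamma'|\mathcal{P}^{(j)}|$ for a constant $\gamma'<1$; set $Q\leftarrow Q\cup B_j$. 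After $O(\log n)$ phases $|\mathcal{P}^{(j)}|<1$, so $Q$ is a blocker, and $|Q| = O(\log n)\cdot O(n/h) = O(n\log n/h)$. Each conditional-expectation step evaluates a pessimistic estimator that is a function of, for each vertex $v$, the number of still-surviving length-$h$ paths through $v$; these counts are obtained by propagating partial counts up and down the trees of $\mathcal{T}$, and because $\mathcal{T}$ is a \emph{consistent} SSSP collection the $|S|$ trees can be pipelined so one estimator evaluation costs $\tilde O(|S|\cdot h)$ rounds. With $O(\log n)$ phases and $O(\log n)$ conditional-expectation steps per phase, the total is $\tilde O(|S|\cdot h)$, and the approximation slack $\epsilon$ and density/concentration slack $\delta\le 1/12$ of the BRS scheme contribute the $1/(\epsilon^2\delta^3)$ factor, yielding $\tilde O(|S|\cdot h/(\epsilon^2\delta^3))$ rounds.

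The main obstacle is the derandomization: the bad event for a path depends on all $h$ of its vertices, so a naive low-independence union bound does not apply, and one must combine the phased, constant-shrinkage structure with a second-moment argument over a small limited-independence space. The genuinely technical point in the distributed setting is then showing that the pessimistic estimator --- counting surviving length-$h$ paths through every vertex --- can be recomputed after each of the $O(\log n)$ conditional-expectation decisions within $\tilde O(|S|\cdot h)$ rounds, which is precisely where the tree-consistency of the $h$-CSSSP is used to pipeline the $|S|$ trees without congestion.
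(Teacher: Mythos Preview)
The paper does not actually prove this lemma: it is quoted verbatim as a result of Agarwal and Ramachandran~\cite{AR20} and is immediately followed by Corollary~\ref{corollary:Q}, which merely absorbs the constants $\epsilon,\delta$ into the $\tilde O$. So there is no ``paper's own proof'' to compare your attempt against; the present paper uses this lemma as a black box.

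That said, your sketch is consistent with the one-paragraph description of the \cite{AR20} algorithm given in Section~\ref{sec:framework} (framing blocker-set computation as approximate set cover on a hypergraph of length-$h$ paths, adapting the Berger--Rompel--Shor NC scheme, and derandomizing via conditional expectations with counts propagated along the CSSSP trees). Your outline captures the right skeleton --- the hitting-set reformulation, the existence argument giving the $O(n\log n/h)$ size, the phased constant-factor shrinkage, and the use of CSSSP consistency to pipeline the $|S|$ trees when evaluating the pessimistic estimator. Where it remains a sketch rather than a proof is exactly at the point you flag yourself: the second-moment/limited-independence analysis that makes each phase's conditional-expectation search succeed with $O(\log n)$ bit-by-bit decisions, and the precise accounting for why the estimator updates stay within $\tilde O(|S|\cdot h)$ rounds per decision. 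Those details (and the provenance of the $1/(\epsilon^2\delta^3)$ factor) live in \cite{AR20}, not here, so for the purposes of this paper there is nothing further to supply.
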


Since $\epsilon$ and $\delta$ are small positive constants and the only
variables are $S$ and $h$, we can deduce the following simpler corollary
from the above lemma.

\begin{corollary}[~\cite{AR20}] \label{corollary:Q}
    For a source set $S \subseteq V$ and hop-length $h$ and given its 
    corresponding $h$-CSSSP collection $\mathcal{T}$, a blocker set $Q$
    for the collection of paths in $\mathcal{T}$ can be computed 
    deterministically in $\tilde{O}(|S|\cdot h)$ rounds.
    Also, the blocker set $Q$ has size $O(n\log n /h)$.
\end{corollary}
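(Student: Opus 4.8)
The plan is straightforward: Corollary~\ref{corollary:Q} is nothing more than a specialization of the preceding Lemma from~\cite{AR20}, obtained by folding the constant factors into the $\tilde{O}$ notation. First I would recall that the Lemma gives a round bound of $\tilde{O}(|S|\cdot h/(\epsilon^2\delta^3))$, where $\epsilon$ and $\delta$ are fixed positive constants satisfying $\epsilon,\delta \le 1/12$. Crucially, these quantities do not depend on the input parameters $|S|$, $h$, or $n$; they are absolute constants fixed once and for all by the analysis in~\cite{AR20}. Hence $1/(\epsilon^2\delta^3)$ is itself a constant (at most $12^5$), and since the $\tilde{O}$ notation suppresses constant and polylogarithmic factors, we have $\tilde{O}(|S|\cdot h/(\epsilon^2\delta^3)) = \tilde{O}(|S|\cdot h)$.

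Second, the size bound on the blocker set requires no work at all: the Lemma already asserts $|Q| = O(n\log n/h)$, a bound that involves neither $\epsilon$ nor $\delta$, so it transfers verbatim to the corollary.

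I do not expect any real obstacle here. The only point that warrants a moment's care is to confirm that $\epsilon$ and $\delta$ are indeed treated as absolute constants in~\cite{AR20}, rather than as tunable parameters of the algorithm whose optimal setting might scale with the input; but the statement of the Lemma settles this by explicitly pinning them to values at most $1/12$. Given that, the corollary follows immediately by substituting these constant values and invoking the definition of $\tilde{O}$.
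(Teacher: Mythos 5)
Your proposal is correct and matches the paper's own (one-line) justification: the paper likewise observes that $\epsilon$ and $\delta$ are fixed small constants, absorbs the factor $1/(\epsilon^2\delta^3)$ into the $\tilde{O}$ notation, and carries the size bound $O(n\log n/h)$ over unchanged.
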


\subsection{Broadcasting Primitives}   

We now describe a commonly known broadcasting primitive that we use in 
many of our algorithms.

\begin{lemma}[~\cite{P00}]  \label{lemma:broadcast}
Let $G = (V, E)$ be a directed or undirected graph.
Suppose each node $v \in V$ holds $k_v \geq 0$ messages of $O(\log n)$
bits each, for a total of $K = \sum_{v\in V} k_v$ messages.
Then all nodes in the network can receive
these $K$ messages within $O(K + n)$ rounds, where $n = |V|$.
\end{lemma}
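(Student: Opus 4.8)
The plan is to use a standard upcast/downcast argument along a BFS tree of the underlying communication graph $U_G$. First I would have the nodes build (or assume already built) a BFS tree $T$ of $U_G$ rooted at an arbitrary node $r$; since $U_G$ is connected this tree has depth at most $n-1$ and can be constructed in $O(n)$ rounds. The goal is reduced to two subtasks: (a) deliver all $K$ messages to the root $r$ (convergecast), and (b) have $r$ flood all $K$ messages back down to every node (broadcast). If each of these can be done in $O(K+n)$ rounds, we are done.

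For the convergecast, the key step is pipelining. Each node maintains a FIFO queue initialized with its own $k_v$ messages; in each round, every non-root node forwards one message from its queue (if nonempty) to its parent in $T$, and every node appends messages received from its children to its queue. The standard pipelining analysis shows that a message originating at a node at depth $d$ reaches $r$ within $d + (\text{number of messages ``ahead'' of it})$ rounds, and since the total number of messages is $K$ and the depth is at most $n$, all $K$ messages arrive at $r$ within $O(K+n)$ rounds. The main thing to verify here is the queue/contention bookkeeping: because $T$ is a tree, the messages funneling toward $r$ never ``multiply'', so a simple potential or a direct induction on time bounds the arrival time of the last message by $K + \mathrm{depth}(T) \le K + n$.

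For the broadcast phase, $r$ simply sends out the $K$ messages one per round along every tree edge, and each internal node relays what it receives one per round to all its children; by the same pipelining argument this finishes in $\mathrm{depth}(T) + K = O(K+n)$ rounds. I expect the main (mild) obstacle to be making the pipelining bound rigorous — specifically arguing that no node's queue causes a delay larger than $O(K)$ beyond the $O(n)$ term. This is handled by observing that the number of messages that can ever pass through any single tree edge is at most $K$, so the per-edge delay telescopes into the claimed $O(K+n)$ bound. Combining the two phases gives the stated $O(K+n)$ round complexity, which completes the proof.
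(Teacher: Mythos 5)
Your proposal is correct and is essentially the standard argument: the paper itself gives no proof, deferring to Peleg~\cite{P00}, and the upcast/downcast pipelining on a BFS tree of depth at most $n$ is exactly that textbook proof, yielding $O(K+\mathrm{depth})=O(K+n)$ rounds for each phase. The only point to tighten is the final remark about ``per-edge delays telescoping'': the rigorous version is the usual pipelining lemma (after the initial $\mathrm{depth}(T)$ rounds the root receives at least one new message per round, and symmetrically for the downcast), rather than summing a per-edge delay bound, which if read literally would give $O(K\cdot n)$.
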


See Peleg~\cite{P00} for the proof of 
Lemma~\ref{lemma:broadcast}.

\section{Computing Sequence of Blocker Sets}    \label{sec:blocker}

In this Section, we describe our algorithm for computing a sequence of
successive blocker sets.
A similar approach has been used earlier in the randomized setting in
the distributed computation of 
randomized all pairs shortest paths (APSP) by Bernstein and 
Nanongkai~\cite{B21}.
We use the deterministic blocker set algorithm of
Agarwal and Ramachandran~\cite{AR20} to compute
the individual blocker sets in our deterministic algorithm 
for computing this sequence of successive blocker sets.

Our sequence of successive blocker sets $\{Q_i\}$ exhibit very strong
properties related to long-hop shortest paths, which allows us to obtain
our $\tilde{O}(n)$ rounds deterministic algorithms for minimum weight cycle
and multi-source shortest paths
problems described in Sections~\ref{sec:directed}, 
\ref{sec:undirected} 
and \ref{sec:MSSP}.
One such crucial property is that given two nodes $s$ and $t$ such that the true
shortest path from $s$ to $t$ has very high hop length, then there exists a 
blocker node $q$ in one of the blocker sets in the sequence $\{Q_i\}$ such
that the long-hop shortest path from $s$ to $t$ can be composed of a 
short-hop incoming shortest path from $q$ to $s$ and a 
short-hop outgoing
shortest path from $q$ to $t$.
We establish this property for the sequence of blocker sets $\{Q_i\}$
later on in this section.

\begin{algorithm}
\caption{Computing Sequence of Blocker Sets $\{Q_i\}$}
\begin{algorithmic}[1]
\Statex \textbf{Input:} $G = (V, E)$; 
\State Set $Q_0 \leftarrow V$; $h_0 \leftarrow \log^2 n$  \label{algBlocker:set}
\For{$1 \leq i \leq \lceil \frac{\log n}{\log \log n} \rceil - 1$}  \label{algBlocker:startFor}
\State Compute $h_{i-1}$-CSSSP $\mathcal{T}_{i-1}$ for source set $Q_{i-1}$ using the algorithm in~\cite{A19}.  \label{algBlocker:CSSSP} 
\State $\begin{aligned}[t] 
    & \text{Compute blocker set $Q_i$ for the paths in $\mathcal{T}_{i-1}$ with $Q_{i-1}$ as the source set, using} \\
    & \text{the deterministic blocker set algorithm in~\cite{AR20}.}
    \end{aligned}$ \label{algBlocker:blocker}
\State Every node $q_i \in Q_i$ broadcast its ID to the entire network. \label{algBlocker:broadcast}
\State $h_i \leftarrow h_{i-1}\cdot \log n$ \label{algBlocker:update}
\EndFor \label{algBlocker:endFor}
\end{algorithmic}  \label{algBlocker}
\end{algorithm} 

Algorithm~\ref{algBlocker} describes our deterministic algorithm for computing 
a sequence of blocker sets in a given graph $G = (V,E)$.
In Step~\ref{algBlocker:set} we set the initial blocker set $Q_0$ to the 
entire vertex set $V$ and we set the initial hop-length $h_0$ to $\log^2 n$.
The for loop in Steps~\ref{algBlocker:startFor}-\ref{algBlocker:endFor}
runs for $\lceil \frac{\log n}{\log \log n} \rceil - 1$ iterations in total.

Each iteration $i$ of the for loop proceeds in the following manner:
first in Step~\ref{algBlocker:CSSSP} an $h_{i-1}$-CSSSP $\mathcal{T}_{i-1}$
is constructed for the source set $Q_{i-1}$.
We use the h-CSSSP algorithm in~\cite{A19} to construct this CSSSP.
Then in Step~\ref{algBlocker:blocker} we compute a blocker set $Q_i$ for the
paths in the collection $\mathcal{T}_{i-1}$ using the blocker set algorithm 
in~\cite{AR20}.
The nodes that are selected in the set $Q_i$ broadcast their IDs to the entire
network in Step~\ref{algBlocker:broadcast} in $O(|Q_i| +n )$ rounds.
We then update the hop-length $h_i$ for the next iteration in 
Step~\ref{algBlocker:update}.

We now establish some important properties exhibited by the sequence of
blocker sets constructed in Algorithm~\ref{algBlocker}.
The following lemma shows that for every pair of nodes $s$ and $t$ for which 
there exists a shortest path between them,
there exists a blocker set $Q_j$ in
the sequence $\{Q_i\}$ such that for some node $q_j \in Q_j$,
the weight of the shortest path from $s$ to $t$, $\delta(s,t)$, is equal to the 
sum of the $h_j$-hop 
shortest path distances $\delta^{h_j}(s,q_j)$ and $\delta^{h_j}(q_j,t)$.

\begin{lemma}   \label{lemma:property:Qj}
Let $G = (V, E)$ be a directed or undirected graph and let $\{Q_i\}$ be the 
sequence of blocker sets computed in Algorithm~\ref{algBlocker} for graph $G$.
Then for each node pairs $s, t \in V$ for which a shortest path exists, 
there exists some $j$ such that for some node $q_j \in Q_j$,
the weight of the shortest path from $s$ to $t$, 
$\delta(s,t) = \delta^{h_j}(s,q_j) + \delta^{h_j}(q_j, t)$.
\end{lemma}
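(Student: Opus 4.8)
The plan is to induct on the hop-length of the shortest path from $s$ to $t$, using the blocker sets to successively "shortcut" long paths into shorter ones. Fix $s,t \in V$ for which a shortest path exists, and fix one such shortest path $\mathcal{P}$, say with hop-length $\ell$. If $\ell \le h_0 = \log^2 n$, then $\delta^{h_0}(s,t) = \delta(s,t)$ directly by $h$-hop accuracy, and we can take $j=0$ with $q_0 = t \in Q_0 = V$ (so that $\delta^{h_0}(s,q_0) = \delta(s,t)$ and $\delta^{h_0}(q_0,t) = 0$); alternatively $q_0 = s$. So assume $\ell > h_0$. The key observation is that $Q_1$ is a blocker set for the $h_0$-CSSSP $\mathcal{T}_0$ on source set $Q_0 = V$, so the $h_0$-hop prefix of $\mathcal{P}$ (which is the $s$-to-(appropriate vertex) path in the tree $T_s \in \mathcal{T}_0$, by the consistency property of $h$-CSSSP) contains a node $q_1 \in Q_1$. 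Write $\mathcal{P} = \mathcal{P}_1 \cdot \mathcal{P}_2$ where $\mathcal{P}_1$ is the $s$-to-$q_1$ sub-path (of hop-length $\le h_0$) and $\mathcal{P}_2$ is the $q_1$-to-$t$ sub-path. Since subpaths of shortest paths are shortest paths, $wt(\mathcal{P}_1) = \delta(s,q_1) = \delta^{h_0}(s,q_1)$ (using $h$-hop accuracy since $\mathcal{P}_1$ has $\le h_0$ hops), and $wt(\mathcal{P}_2) = \delta(q_1,t)$.

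Now iterate: $\mathcal{P}_2$ is a shortest path from $q_1 \in Q_1$ to $t$. If its hop-length is $\le h_1$, we are done with $j=1$: we get $\delta(s,t) = \delta^{h_1}(s,q_1) + \delta^{h_1}(q_1,t)$ — here one needs $\delta^{h_1}(s,q_1) = \delta(s,q_1)$ as well, which holds since $\mathcal{P}_1$ has $\le h_0 \le h_1$ hops. Otherwise its hop-length exceeds $h_1$, and since $Q_2$ is a blocker set for the $h_1$-CSSSP on source set $Q_1$, the $h_1$-hop prefix of $\mathcal{P}_2$ (which is a tree-path in $T_{q_1} \in \mathcal{T}_1$) contains a node $q_2 \in Q_2$; split $\mathcal{P}_2 = \mathcal{P}_2' \cdot \mathcal{P}_3$ at $q_2$, with $\mathcal{P}_2'$ of hop-length $\le h_1$. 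Repeating, after $i$ steps we have a node $q_i \in Q_i$ on $\mathcal{P}$ such that the $s$-to-$q_i$ portion has hop-length $\le h_0 + h_1 + \cdots + h_{i-1} \le 2 h_{i-1} \le h_i$ (since the $h_k$ grow by a factor $\log n \ge 2$), hence $\delta^{h_i}(s,q_i) = \delta(s,q_i)$, and the remaining $q_i$-to-$t$ portion is a shortest path whose hop-length strictly decreases at each step. The induction terminates the first time the tail has hop-length $\le h_i$; then $\delta^{h_i}(q_i,t) = \delta(q_i,t)$, and combined with $\delta^{h_i}(s,q_i) = \delta(s,q_i)$ and additivity along $\mathcal{P}$ we get $\delta(s,t) = \delta^{h_i}(s,q_i) + \delta^{h_i}(q_i,t)$, so $j = i$ works.

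The one thing to check carefully — and the main subtlety rather than a genuine obstacle — is that the process cannot run out of blocker sets: the number of iterations of the for loop is $\lceil \log n / \log\log n \rceil - 1$, and since $h_i = \log^2 n \cdot (\log n)^i = (\log n)^{i+2}$, after $i = \lceil \log n / \log\log n \rceil - 2$ steps we have $h_i \ge (\log n)^{\log n / \log\log n} = 2^{\log n} = n$, so every path (which has hop-length $< n$) is captured by then. I would make this counting precise, also noting that if the tail's hop-length drops to $\le h_i$ before we exhaust the loop we stop early, and if it never drops below but we reach the last blocker set $Q_{\lceil \log n/\log\log n\rceil - 1}$, its associated hop-length already exceeds $n$ so the tail is trivially within it. A second minor point to spell out: when we invoke the blocker/CSSSP machinery at step $i$, the relevant path prefix really is a root-to-node path in the tree $T_{q_{i-1}}$ of the CSSSP $\mathcal{T}_{i-1}$, which is where the "consistency" of $h$-CSSSP is used to identify the prefix of $\mathcal{P}$ with a tree path so that Definition~\ref{def:blocker} applies.
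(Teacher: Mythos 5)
Your proposal is correct in substance but takes a genuinely different route from the paper's proof. The paper argues by minimal counterexample: it picks a pair $(s,t)$ violating the statement with minimum shortest-path weight (then minimum hop-length), peels off the last edge $(a,t)$ of a minimum-hop shortest path, applies the statement to $(s,a)$ to obtain $q_{j'}\in Q_{j'}$, and then uses the blocker property on the $h_{j'}$-hop segment from $q_{j'}$ to $a$ to produce $q_{j'+1}\in Q_{j'+1}$ and the chain $\delta(s,t)\ \geq\ \delta^{2h_{j'}}(s,q_{j'+1})+\delta^{1+h_{j'}}(q_{j'+1},t)\ \geq\ \delta^{h_{j'+1}}(s,q_{j'+1})+\delta^{h_{j'+1}}(q_{j'+1},t)$, yielding a contradiction. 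You instead walk forward along one fixed shortest path, splitting off prefixes of hop-lengths $h_0,h_1,\dots$ at successive blocker nodes, with explicit bookkeeping ($h_0+\cdots+h_{i-1}\leq 2h_{i-1}\leq h_i$) and an explicit count that the sequence of blocker sets cannot be exhausted because $h_i\geq n$ at the last index. Both arguments hinge on the same two facts (a blocker node on an $h_j$-hop segment advances the index; the geometric growth of $h_j$ absorbs accumulated hops), but your version is more constructive and makes explicit a boundary issue the paper's proof leaves implicit, namely that $Q_{j'+1}$ must actually exist when the contradiction step is invoked; your observation that $h_i\geq (\log n)^{\log n/\log\log n}=n$ at the final index is exactly what covers this.

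One caveat, which applies to your write-up (and, at the same level of informality, to the paper's): $Q_i$ is a blocker set only for root-to-leaf paths of length $h_{i-1}$ in the CSSSP collection $\mathcal{T}_{i-1}$, not for arbitrary shortest paths in $G$, and the consistency property of $h$-CSSSP only says that tree paths agree across trees of the collection — it does not identify the prefix of your arbitrarily chosen path $\mathcal{P}$ with a tree path. The routine fix is to take $\mathcal{P}$ of minimum hop-length, let $a$ be the node at hop $h_{i-1}$ from the current blocker node, observe that the tree path to $a$ then also has exactly $h_{i-1}$ hops (hence is a root-to-leaf path and is blocked), and either splice that tree path in place of the prefix or reason purely with distances, since $\delta(s,t)=\delta(s,q)+\delta(q,t)$ follows from the triangle inequality once $q$ lies on some shortest $s$-$t$ walk. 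Relatedly, the claim that the tail's hop-length strictly decreases is not needed and can fail (the blocker node found may coincide with the current root); termination should rest, as you also note, on the index count alone.
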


\begin{proof}
    Consider node pair $(s,t)$ in $G$ such that 
    among all node pairs for which the above lemma does not 
    hold, the value of the shortest path distance $\delta(s,t)$ is the 
    minimum among them.
    If there are multiple such pairs, pick the one with the minimum 
    shortest path hop-length.
    Now we need to show that no such pair exists in $G$.

    Let $p_{s,t}$ be the shortest path from $s$ to $t$.
    If there are multiple such paths, pick the one with the minimum hop length.
    Let $(a,t)$ be the last edge on this path.
    If $a = s$ and the hop length is 1, then the above lemma holds 
    as $s \in Q_0$ and $h_0 \geq 1$.

    We now consider the case when $a \neq s$.
    Let the subpath from $s$ to $a$ be $p_{s,a}$.
    Since $p_{s,a}$ has hop-length less than $p_{s,t}$ and has weight 
    less than or equal to $\delta(s,t)$, the above lemma holds for pair 
    $(s,a)$.
    Thus there exists some $j'$ and a node $q_{j'} \in Q_{j'}$
    such that 
    $\delta(s, a) = \delta^{h_{j'}}(s, q_{j'}) + \delta^{h_{j'}}(q_{j'}, a)$.

    Now since the above lemma does not hold for pair $(s,t)$, it implies that
    the path from $q_{j'}$ to $t$ has hop-length strictly greater than $h_{j'}$
    and the hop-length from $q_{j'}$ to $a$ is equal to $h_{j'}$.
    Since the hop-length from $q_{j'}$ to $a$ is equal to $h_{j'}$, by the 
    definition of blocker set (Definition~\ref{def:blocker}), there exists 
    some node $q_{j' + 1} \in Q_{j' + 1}$, such that $q_{j' + 1}$ lies on the 
    path from $q_{j'}$ to $a$. 
    Now we can re-write the distance value $\delta(s,t)$ as:

    \begin{align*}
    \delta(s,t) & = \delta(s,a) + w(a,t)    \\
    & = \delta^{h_{j'}}(s, q_{j'}) + \delta^{h_{j'}}(q_{j'}, a) + w(a,t)    \\
    & = \delta^{h_{j'}}(s, q_{j'}) + \delta^{h_{j'}}(q_{j'}, q_{j'+1}) + 
    \delta^{h_{j'}}(q_{j'+1}, a) + w(a,t) \\
    & \geq \delta^{2\cdot h_{j'}}(s, q_{j'+1}) + \delta^{1+h_{j'}}(q_{j'+1}, t) \\
    & \geq \delta^{h_{j'+1}}(s, q_{j'+1}) + \delta^{h_{j'+1}}(q_{j'+1}, t) \text{\hspace{.1in} (since $2\cdot h_{j'} \leq \log n \cdot h_{j'} = h_{j'+1}$)} 
    \end{align*}

    Since the sum of the distance values $\delta^{h_{j'+1}}(s, q_{j'+1})$, and $\delta^{h_{j'+1}}(q_{j'+1}, t)$ corresponds to a path from $s$ to $t$, hence:
    \begin{align*}
    \delta(s,t) & = \delta^{h_{j'+1}}(s, q_{j'+1}) + \delta^{h_{j'+1}}(q_{j'+1}, t) 
    \end{align*}

    The above equation shows that the property holds for pair $(s,t)$
    for $j = j' +1$, contradicting our earlier assumption.
    This shows that the stated property holds for all node pairs $(s,t)$
    such that a shortest path exists between them.
    
\end{proof}

We now provide an upper bound on the size of the blocker sets $Q_j$ in the 
blocker set sequence $\{Q_i\}$ (Lemma~\ref{lemma:Qj}) and then
using this
we show that Algorithm~\ref{algBlocker}
runs in $\tilde{O}(n)$ rounds (Lemma~\ref{lemma:algBlocker:bound}).

\begin{lemma}   \label{lemma:Qj}
    Let $G = (V, E)$ be a directed or undirected graph and let $\{Q_i\}$ be the 
sequence of blocker sets computed in Algorithm~\ref{algBlocker} for graph $G$.
Then for each $1 \leq j \leq \lceil \frac{\log n}{\log \log n} \rceil - 1$,
the blocker set $Q_j$ in the blocker set sequence $\{Q_i\}$ has size 
$O(\frac{n\log n}{h_{j-1}})$.
\end{lemma}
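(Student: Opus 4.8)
The plan is to apply the size bound from Corollary \ref{corollary:Q} directly to each iteration of Algorithm \ref{algBlocker}. In iteration $i$ (for $1 \le i \le \lceil \frac{\log n}{\log\log n}\rceil - 1$), the blocker set $Q_i$ is computed for the paths in the $h_{i-1}$-CSSSP collection $\mathcal{T}_{i-1}$ with source set $Q_{i-1}$ and hop-length $h_{i-1}$. By Corollary \ref{corollary:Q}, the resulting blocker set has size $O(n\log n / h_{i-1})$. Setting $j = i$ gives exactly the claimed bound $Q_j$ has size $O(n\log n / h_{j-1})$. So at its core the lemma is an immediate consequence of the size guarantee in the cited blocker-set construction.

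The one point that needs care is that Corollary \ref{corollary:Q} is stated for an arbitrary source set $S$ and hop-length $h$, so I should check that the hypotheses are met at every iteration: namely, that $h_{i-1}$ is a valid (positive integer) hop-length and that we genuinely have the $h_{i-1}$-CSSSP collection in hand when we invoke the blocker-set algorithm. The former follows since $h_0 = \log^2 n$ and $h_i = h_{i-1}\log n$, so each $h_{i-1}$ is a power of $\log n$ (up to the $\log^2 n$ base), hence a positive integer at least $\log^2 n \ge 1$ for $n$ large enough; the latter is ensured because Step \ref{algBlocker:CSSSP} of Algorithm \ref{algBlocker} constructs $\mathcal{T}_{i-1}$ just before Step \ref{algBlocker:blocker} computes $Q_i$, using the $h$-CSSSP algorithm of \cite{A19} (cf.\ Lemma \ref{lemma:hCSSSP}). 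Thus the precondition of Corollary \ref{corollary:Q} holds, and the size bound applies verbatim.

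I would also note, for completeness, that the bound is only asserted for $j \ge 1$; the set $Q_0 = V$ has size $n$ and is not claimed to satisfy a $1/h_{-1}$-type bound, so there is nothing to prove there. The argument does not require any induction on $i$ — the size of $Q_i$ depends only on $n$ and $h_{i-1}$ through Corollary \ref{corollary:Q}, not on the sizes of earlier blocker sets — so the proof is essentially a one-line invocation of the corollary once the hypotheses are verified. I do not expect any real obstacle here; the only mild subtlety is making the hop-length integrality remark precise, which is routine.
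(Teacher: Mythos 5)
Your proposal is correct and matches the paper's own proof: both simply observe that in Step~\ref{algBlocker:blocker} the set $Q_j$ is a blocker set for the paths in the $h_{j-1}$-CSSSP collection $\mathcal{T}_{j-1}$, so Corollary~\ref{corollary:Q} immediately gives the size bound $O(\frac{n\log n}{h_{j-1}})$. The extra remarks you add (integrality of the hop-length, availability of the CSSSP collection) are routine sanity checks the paper omits but do not change the argument.
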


\begin{proof}
    In Step~\ref{algBlocker:blocker}, the blocker set $Q_j$ is constructed for
    the paths in collection $\mathcal{T}_{j-1}$, which have hop-length at most
    $h_{j-1}$.
    Thus by Corollary~\ref{corollary:Q}, the set $Q_j$ has size
    $O(\frac{n\log n}{h_{j-1}})$.
\end{proof}

\begin{lemma}   \label{lemma:algBlocker:bound}
    Algorithm~\ref{algBlocker} computes the blocker set sequence $\{Q_i\}$ 
    deterministically in $\tilde{O}(n)$ rounds.
\end{lemma}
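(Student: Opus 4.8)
The plan is to bound the round complexity of Algorithm~\ref{algBlocker} by summing the costs of its three substantive operations over all iterations: the $h_{i-1}$-CSSSP computation in Step~\ref{algBlocker:CSSSP}, the deterministic blocker set computation in Step~\ref{algBlocker:blocker}, and the broadcast in Step~\ref{algBlocker:broadcast}. For iteration $i$, the source set is $Q_{i-1}$ and the hop-length is $h_{i-1}$, so by Lemma~\ref{lemma:hCSSSP} the CSSSP costs $O(|Q_{i-1}|\cdot h_{i-1})$ rounds and by Corollary~\ref{corollary:Q} the blocker set costs $\tilde{O}(|Q_{i-1}|\cdot h_{i-1})$ rounds; the broadcast costs $O(|Q_i| + n)$ rounds by Lemma~\ref{lemma:broadcast}. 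So the dominant term per iteration is $\tilde{O}(|Q_{i-1}|\cdot h_{i-1} + n)$, and the total is $\tilde{O}\left(\sum_{i=1}^{\lceil \log n/\log\log n\rceil - 1}\left(|Q_{i-1}|\cdot h_{i-1} + n\right)\right)$.

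The key estimate is that $|Q_{i-1}|\cdot h_{i-1} = \tilde{O}(n)$ for every $i$ in the range. For $i = 1$ this is immediate since $|Q_0| = n$ and $h_0 = \log^2 n$, giving $|Q_0|\cdot h_0 = O(n\log^2 n) = \tilde O(n)$. For $i \geq 2$, I would invoke Lemma~\ref{lemma:Qj}: $|Q_{i-1}| = O(n\log n / h_{i-2})$, and since $h_{i-1} = h_{i-2}\cdot \log n$, the product telescopes to $|Q_{i-1}|\cdot h_{i-1} = O\left(\frac{n\log n}{h_{i-2}}\cdot h_{i-2}\log n\right) = O(n\log^2 n) = \tilde O(n)$. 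Thus each iteration costs $\tilde O(n)$ rounds.

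Finally, I would multiply by the number of iterations, which is $\lceil \log n/\log\log n\rceil - 1 = O(\log n)$, an $O(\log n)$ factor that is absorbed into the $\tilde O(\cdot)$ notation, yielding a total of $\tilde O(n)$ rounds. I should also note that the bookkeeping steps (Step~\ref{algBlocker:set} and Step~\ref{algBlocker:update}) require only local computation and no communication. The correctness of the intermediate CSSSP and blocker set computations in each iteration is guaranteed by the cited results, so the only thing to verify here is the round count, which the above sum settles.

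The main obstacle — really the only nontrivial point — is making sure the hop-lengths $h_i$ do not grow so fast that $|Q_{i-1}|\cdot h_{i-1}$ exceeds $\tilde O(n)$; this is exactly why the recurrence $h_i = h_{i-1}\cdot \log n$ is chosen so that the shrinking of the blocker set (by a $\log n/h$ factor) cancels the growth of the hop bound, and why the number of iterations is capped at $\lceil\log n/\log\log n\rceil - 1$: after that many multiplications by $\log n$, $h_i$ would reach roughly $n$, at which point $|Q_i|$ would drop to $O(\log n)$ and the process naturally terminates. I would state this balancing explicitly as the crux of the argument.
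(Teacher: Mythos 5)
Your proposal is correct and follows essentially the same route as the paper's own proof: it bounds each iteration by charging Step~\ref{algBlocker:CSSSP} via Lemma~\ref{lemma:hCSSSP}, Step~\ref{algBlocker:blocker} via Corollary~\ref{corollary:Q}, and Step~\ref{algBlocker:broadcast} via Lemma~\ref{lemma:broadcast}, uses Lemma~\ref{lemma:Qj} together with $h_{i-1} = h_{i-2}\cdot\log n$ to get $|Q_{i-1}|\cdot h_{i-1} = O(n\log^2 n)$ per iteration, and multiplies by the $O(\log n/\log\log n)$ iterations. Your separate treatment of the $i=1$ base case ($|Q_0|=n$, $h_0=\log^2 n$) is if anything slightly more careful than the paper's, which applies the Lemma~\ref{lemma:Qj} bound uniformly.
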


\begin{proof}
    We first show that each iteration of the for loop in 
    Steps~\ref{algBlocker:startFor}-\ref{algBlocker:endFor} takes $\tilde{O}(n)$
    rounds.
    Using Lemma~\ref{lemma:hCSSSP}, Step~\ref{algBlocker:CSSSP} takes
    $O(|Q_{i-1}|\cdot h_{i-1}) = O(\frac{n\log n}{h_{i-2}} \cdot h_{i-1})$ 
    (since $|Q_{i-1}| = O(\frac{n\log n}{h_{i-2}})$ using Lemma~\ref{lemma:Qj}) 
    rounds. 
    Since $h_{i-1} = h_{i-2}\cdot \log n$, Step~\ref{algBlocker:CSSSP} takes
    $O(n\log^2 n)$ rounds in each iteration $i$.
    The computation of blocker set $Q_i$ in Step~\ref{algBlocker:blocker} 
    takes $\tilde{O}(|Q_{i-1}|\cdot h_{i-1})$ rounds 
    (Corollary~\ref{corollary:Q}).
    As earlier, 
    $\tilde{O}(|Q_{i-1}|\cdot h_{i-1}) = \tilde{O}(n\log^2 n) = \tilde{O}(n)$.
    Step~\ref{algBlocker:broadcast} takes $O(n)$ rounds 
    (Lemma~\ref{lemma:broadcast}).
    Since there are in total $O(\frac{\log n}{\log \log n})$ iterations of the 
    for loop, the entire execution of Algorithm~\ref{algBlocker} takes
    $\tilde{O}(n)$ rounds.
\end{proof}

\section{Directed Shortest Cycles}    \label{sec:directed}

In this section, we describe our deterministic algorithm for computing 
the weight of shortest cycles in a directed graph $G = (V, E)$.
A trivial way to compute these shortest cycles is first to do an initial 
computation of APSP on the graph and then using the 
computed shortest path distance values to compute the weight of the 
shortest cycles.
Specifically for each node $v \in V$, the weight of the shortest cycle containing
node $v$ can be calculated by taking the minimum of $\{w(v,u) + \delta(u, v)\}$
across all outgoing neighbors $u$ of node $v$.
This approach results in a $\tilde{O}(n^{4/3})$ round bound as the current
best deterministic bound for APSP is $\tilde{O}(n^{4/3})$~\cite{AR20}.

Our algorithm for computing shortest cycles differs from the above described
approach in that it does not need an initial computation of APSP.
Instead we first compute the blocker set sequence $\{Q_i\}$ for $G$ and then
compute the small hop shortest path distances from each of the blocker sets
$Q_j$ in this sequence.
We then use these small hop distance values to compute the weight of the
shortest cycles for each of the nodes $v$ in graph $G$.
This approach, in effect, allows us to obtain a $\tilde{O}(n)$ rounds
deterministic algorithm for the shortest cycles problem.

The blocker set sequence $\{Q_i\}$ described in Section~\ref{sec:blocker}
exhibits another strong property in the context of shortest cycles which is crucial
in obtaining our shortest cycles algorithm.
Specifically given a node $v \in V$ such that the shortest cycle containing 
node $v$ has a very high hop length, then there exists a blocker node $q$ in one 
of the blocker sets in the sequence $\{Q_i\}$ such that the shortest cycle at 
node $v$ consists of an outgoing edge $(v,u)$, followed by a short-hop 
incoming shortest path from source $q$ to $u$ and a short-hop outgoing shortest
path from source $q$ to node $v$.
We state this formally in Lemma~\ref{property-cycle-directed}.

\begin{lemma}   \label{property-cycle-directed}
    Let $G = (V,E)$ be a directed graph and let $\{Q_i\}$ be the sequence of
    blocker sets computed in Algorithm~\ref{algBlocker} for graph $G$.
    Let $C_v$ denote the shortest cycle containing node $v$, if such a cycle 
    exists.
    Then there exists $u \in \mathcal{N}(v)$ and a $j$ such that for some
    node $q_j \in Q_j$, the weight of cycle $C_v$, 
    $wt(C_v) = w(v,u) + \delta^{h_j}(u,q_j) + \delta^{h_j}(q_j, v)$.
\end{lemma}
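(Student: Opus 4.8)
The plan is to reduce Lemma~\ref{property-cycle-directed} to the already-established path-decomposition result, Lemma~\ref{lemma:property:Qj}. First I would observe that the shortest cycle $C_v$ containing $v$ necessarily uses at least one outgoing edge of $v$; write this edge as $(v,u)$, so that $C_v$ consists of the edge $(v,u)$ followed by a path from $u$ back to $v$. I would argue that this $u$-to-$v$ portion is in fact a shortest path from $u$ to $v$: if it were not, we could splice in a genuine shortest path from $u$ to $v$ and obtain a cycle through $v$ of strictly smaller weight, contradicting minimality of $C_v$. (One should be slightly careful here: we need the cycle to remain a cycle through $v$, but since any walk from $u$ to $v$ together with the edge $(v,u)$ forms a closed walk through $v$, and a closed walk through $v$ contains a cycle through $v$ of no greater weight, this is fine — or, more cleanly, the lemma only asserts a weight equality, so replacing a closed walk by a cycle it contains only helps.) Hence $wt(C_v) = w(v,u) + \delta(u,v)$.

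Next I would apply Lemma~\ref{lemma:property:Qj} to the node pair $(u,v)$, which is legitimate because a path from $u$ to $v$ exists (namely the one inside $C_v$). That lemma yields an index $j$ and a node $q_j \in Q_j$ with $\delta(u,v) = \delta^{h_j}(u,q_j) + \delta^{h_j}(q_j,v)$. Substituting, $wt(C_v) = w(v,u) + \delta^{h_j}(u,q_j) + \delta^{h_j}(q_j,v)$, which is exactly the claimed identity (noting that $\mathcal{N}(v)$ in the statement should be read as $\mathcal{N}_{out}(v)$ in the directed setting, matching the outgoing edge $(v,u)$).

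The only subtlety — and what I expect to be the main thing to get right rather than a genuine obstacle — is the self-loop / degenerate case. If the shortest cycle $C_v$ is a single edge $(v,v)$ (a self-loop of weight $w(v,v)$) or more generally when the "path from $u$ back to $v$" is trivial, one must check the formula still parses: here $u = v$, $\delta^{h_j}(u,q_j) + \delta^{h_j}(q_j,v)$ must equal $\delta(v,v) = 0$, which holds by taking $q_j = v \in Q_0 = V$ and using that $\delta^{h_j}(v,v) = 0$; this is consistent with how Lemma~\ref{lemma:property:Qj} handles the $a=s$, hop-length-$1$ base case. Similarly, if no cycle through $v$ exists the statement is vacuous. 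So the proof is essentially: extract the last-edge-into-the-loop structure, invoke the path lemma on the remaining $u$-to-$v$ shortest path, and substitute — with a short remark covering the trivial cases.
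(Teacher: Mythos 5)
Your proof is correct and takes essentially the same route as the paper: peel off the outgoing edge $(v,u)$ of $C_v$, note that the remaining $u$-to-$v$ portion must be a shortest path by minimality of $C_v$, and then apply Lemma~\ref{lemma:property:Qj} to the pair $(u,v)$ and substitute. Your extra care about the splice-as-closed-walk subtlety and the degenerate cases goes slightly beyond what the paper spells out, but it is the same argument.
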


\begin{proof}
    Let $(v,u)$ be the edge incident to $v$ on cycle $C_v$. 
    Let $p_{u,v}$ be the subpath from $u$ to $v$ in $C_v$.
    Clearly $p_{u,v}$ is a shortest path from $u$ to $v$ in $G$, otherwise
    $C_v$ would not be the shortest cycle for node $v$.
    Then using Lemma~\ref{lemma:property:Qj}, there exists $q_j \in Q_j$
    such that $\delta(u,v)$ is equal to the sum of values $\delta^{h_j}(u,q_j)$
    and $\delta^{h_j}(q_j,v)$.
    This establishes the above lemma.
\end{proof}

Algorithm~\ref{algDirected} describes our deterministic algorithm for computing 
the shortest cycles for each node $v \in V$ in a given  directed graph 
$G = (V,E)$.
In Step~\ref{algDirected:set} we set the initial hop length $h_0$ to $\log^2 n$
and we also set the shortest cycle distance values, $\delta (v)$, 
for all nodes $v \in V$ to $\infty$.
We then compute the deterministic blocker set sequence $\{Q_i\}$ for graph $G$
using the algorithm described in Section~\ref{sec:blocker} in 
Step~\ref{algDirected:blocker}.
The for loop in Steps~\ref{algDirected:startFor}-\ref{algDirected:endFor}
runs for $\lceil \frac{\log n}{\log \log n} \rceil$ iterations in total.

\begin{algorithm}
\caption{Computing Directed Shortest Cycles}
\begin{algorithmic}[1]
\Statex \textbf{Input:} Directed Graph $G = (V, E)$
\State Set $h_0 \leftarrow \log^2 n$; \hspace{5pt} $\delta (v) \leftarrow \infty$ for all $v \in V$  \label{algDirected:set}
\State Compute deterministic blocker set sequence $\{Q_i\}$ for $G$ using Algorithm~\ref{algBlocker}. \label{algDirected:blocker}
\For{$0 \leq i \leq \lceil \frac{\log n}{\log \log n} \rceil - 1$}  \label{algDirected:startFor}
\State $\begin{aligned}[t]
    & \text{Compute $h_i$-hop in-SSSP for all source nodes $q_i \in Q_i$ using Bellman-Ford} \\
    & \text{algorithm~\cite{B58} (Algorithm~\ref{algBFIncoming}).}
    \end{aligned}$ \label{algDirected:inHop}
\State  $\begin{aligned}[t]
    & \text{Compute $h_i$-hop out-SSSP for all source nodes $q_i \in Q_i$ using Bellman-Ford} \\
    & \text{algorithm~\cite{B58} (Algorithm~\ref{algBF}).}
    \end{aligned}$ \label{algDirected:outHop}
\State $\begin{aligned}[t]
    & \text{Each node $u \in V$ sends to all its incoming neighbors $v \in \mathcal{N}_{in}(u)$, the $h_i$-hop in-SSSP} \\
    & \text{distance values, $\delta^{h_i}(u,q_i)$, for all source nodes $q_i \in Q_i$.}
    \end{aligned}$ \label{algDirected:send}
\State \textbf{Local Step at each node $\bm{v \in V}$:} \label{algDirected:local}
\State \hspace{10pt} $\delta (v) \leftarrow \min \{\delta (v), \text{\hspace{5pt}}\min_{q_i \in Q_i} \{ \min_{u \in \mathcal{N}_{out}(v)} (w(v,u) + \delta^{h_i}(u,q_i) + \delta^{h_i}(q_i,v)) \} \} $ \label{algDirected:deltav}
\State $h_i \leftarrow h_{i-1}\cdot \log n$ \label{algDirected:update}
\EndFor \label{algDirected:endFor}
\end{algorithmic}  \label{algDirected}
\end{algorithm} 

Each iteration $i$ of the for loop in 
Steps~\ref{algDirected:startFor}-\ref{algDirected:endFor} proceeds in the 
following manner: 
Step~\ref{algDirected:inHop} computes the $h_i$-hop incoming shortest paths
from all source nodes $q_i$ in the blocker set $Q_i$ using the well-known
Bellman-Ford algorithm~\cite{B58} (also described in 
Algorithm~\ref{algBFIncoming}).
Similarly in Step~\ref{algDirected:outHop} using Bellman-Ford 
algorithm~\cite{B58} (Algorithm~\ref{algBF}),
we compute the $h_i$-hop outgoing
shortest paths from all source nodes $q_i$ in the blocker set $Q_i$.
Then in Step~\ref{algDirected:send} every node $u \in V$ sends all the computed
$h_i$-hop incoming shortest path distance values, $\delta^{h_i} (u,q_i)$,
for all blocker nodes $q_i \in Q_i$ to all its incoming neighbor nodes 
$v \in \mathcal{N}_{in}(u)$.

Using the received values in Step~\ref{algDirected:send}, every node $v \in V$ then computes the weight
of candidate shortest cycles passing through it, by summing up the 
following three components: (1) weight of an outgoing edge $(v,u)$, 
(2) followed by the weight of the $h_i$-hop shortest path from $u$ to 
a blocker node $q_i \in Q_i$, $\delta^{h_i} (u,q_i)$, 
(3) and the weight of the $h_i$-hop shortest path from  $q_i$ to $v$, 
$\delta^{h_i} (q_i, v)$.
Node $v$ then calculates the weight of all such candidate cycles for all 
of its outgoing neighbor nodes $u$ and for all blocker nodes $q_i \in Q_i$.
It then updates the value of $\delta (v)$, if the weight of one of the 
candidate cycles is smaller than the current value of $\delta (v)$
(Steps~\ref{algDirected:local}-\ref{algDirected:deltav}).

We now show that, for each node $v \in V$, Algorithm~\ref{algDirected}
correctly computes the weight of the shortest cycle passing through node $v$,
if such a cycle exists in the graph $G$.

\begin{lemma}   \label{lemma:algDirected:correctness}
    Let $G = (V,E)$ be a directed graph.
    Then for each node $v \in V$, Algorithm~\ref{algDirected} correctly
    computes the weight of the shortest cycle passing through node $v$, 
    $\delta(v)$, if such a cycle exists.
\end{lemma}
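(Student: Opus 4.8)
The plan is to prove the two inequalities $\delta^{\mathrm{fin}}(v) \le wt(C_v)$ and $\delta^{\mathrm{fin}}(v) \ge wt(C_v)$ separately, where $\delta^{\mathrm{fin}}(v)$ denotes the value stored at node $v$ when Algorithm~\ref{algDirected} halts (with the convention $wt(C_v)=\infty$ when no cycle through $v$ exists). Throughout I will use that Algorithm~\ref{algBF} and Algorithm~\ref{algBFIncoming}, run for $h_i$ rounds, compute at the relevant nodes exactly the values $\delta^{h_i}(u,q_i)$ and $\delta^{h_i}(q_i,v)$ (Section~\ref{sec:small}); that Step~\ref{algDirected:send} delivers $\delta^{h_i}(u,q_i)$ from $u$ to each $v \in \mathcal{N}_{in}(u)$; and that the blocker-set sequence $\{Q_i\}$ produced by Algorithm~\ref{algBlocker} and the for-loop of Algorithm~\ref{algDirected} both range over exactly the indices $0 \le i \le \lceil \log n/\log\log n \rceil - 1$, so every $Q_i$ in the sequence is handled in some iteration.

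For the upper bound ("completeness"), assume a cycle through $v$ exists, so $C_v$ is well defined. By Lemma~\ref{property-cycle-directed} there are $u \in \mathcal{N}_{out}(v)$, an index $j$, and $q_j \in Q_j$ with $wt(C_v) = w(v,u) + \delta^{h_j}(u,q_j) + \delta^{h_j}(q_j,v)$; as noted, $j$ is in the loop range, so iteration $i=j$ is executed. In that iteration node $v$ obtains $\delta^{h_j}(u,q_j)$ for its out-neighbor $u$ (via Steps~\ref{algDirected:inHop}--\ref{algDirected:send}) and holds $\delta^{h_j}(q_j,v)$, and the minimum taken in Step~\ref{algDirected:deltav} ranges over a set of terms that includes exactly $w(v,u) + \delta^{h_j}(u,q_j) + \delta^{h_j}(q_j,v) = wt(C_v)$. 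Hence immediately after iteration $j$ we have $\delta(v) \le wt(C_v)$, and since the stored value is non-increasing across iterations (Step~\ref{algDirected:deltav} only takes a min with the current value), $\delta^{\mathrm{fin}}(v) \le wt(C_v)$.

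For the lower bound ("soundness"), note that $\delta(v)$ changes only through Step~\ref{algDirected:deltav}, which replaces it by some $W = w(v,u) + \delta^{h_i}(u,q_i) + \delta^{h_i}(q_i,v)$ with $u \in \mathcal{N}_{out}(v)$, $q_i \in Q_i$. I will show that any such finite $W$ satisfies $W \ge wt(C_v)$: finiteness of $W$ means there is an edge $(v,u)$, a $u$-to-$q_i$ walk of weight $\delta^{h_i}(u,q_i)$, and a $q_i$-to-$v$ walk of weight $\delta^{h_i}(q_i,v)$, whose concatenation is a closed walk through $v$ of total weight $W$; since all edge weights are non-negative, repeatedly short-cutting internally repeated vertices (leaving only $v$ repeated, as the two endpoints) extracts a simple cycle through $v$ of weight at most $W$, so $wt(C_v) \le W$. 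As $\delta(v)$ is initialized to $\infty$, this yields $\delta^{\mathrm{fin}}(v) \ge wt(C_v)$; in particular, if no cycle through $v$ exists, no finite $W$ is ever assigned and $\delta^{\mathrm{fin}}(v)=\infty=wt(C_v)$.

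Combining the two bounds gives $\delta^{\mathrm{fin}}(v) = wt(C_v)$, as required. I expect the only genuinely delicate point to be the soundness step — specifically, verifying that the concatenated closed walk contains a cycle through $v$ (not merely some cycle elsewhere in the graph) and handling the no-cycle case cleanly so that $\infty$ is returned exactly when appropriate. The completeness direction should reduce to the bookkeeping check that the term guaranteed by Lemma~\ref{property-cycle-directed} is literally one of the terms minimized in Step~\ref{algDirected:deltav} during iteration $j$.
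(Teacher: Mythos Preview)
Your proof is correct and takes essentially the same approach as the paper's: both invoke Lemma~\ref{property-cycle-directed} to obtain $\delta(v)\le wt(C_v)$, and both argue that any finite candidate $w(v,u)+\delta^{h_i}(u,q_i)+\delta^{h_i}(q_i,v)$ witnesses a closed walk through $v$, hence a cycle through $v$ of no greater weight. The paper organizes this as a case split on whether $v$ lies on any cycle while you phrase it as two inequalities, but the substance is identical (and your short-cutting remark is, if anything, more careful than the paper's one-line assertion that the concatenation ``gives us a cycle involving node $v$'').
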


\begin{proof}
    Fix a node $v$.
    We consider two cases: (i) the case when $v$ is not part of 
    any cycle in graph $G$, (ii) the case when $v$ is part of 
    some cycle in graph $G$.
    \newline

    \textit{(i) $v$ is not part of any cycle in graph $G$:}
    In this case, we need to show that the shortest cycle distance
    value for node $v$, $\delta (v)$, remains at $\infty$ 
    throughout the execution of the algorithm.
    On the contrary assume that $\delta (v)$ was instead set to 
    some finite value.
    The only step where the $\delta (v)$ values are 
    updated is Step~\ref{algDirected:deltav}, where it is updated 
    to value $w(v,u) + \delta^{h_i}(u,q_i) + \delta^{h_i}(q_i, v)$,
    for some outgoing neighbor $u$ of $v$ and a blocker node
    $q_i \in Q_i$.
    It further implies that both distance values, 
    $\delta^{h_i}(u,q_i)$ and $\delta^{h_i}(q_i, v)$, are finite.
    Let $p_{u,q_i}$ be the path that corresponds to the 
    distance value $\delta^{h_i}(u,q_i)$ and let path $p_{q_i, v}$
    corresponds to the distance value $\delta^{h_i}(q_i, v)$.
    But the edge $(v,u)$, followed by paths $p_{u,q_i}$ and
    $p_{q_i, v}$ gives us a cycle involving node $v$, resulting in
    a contradiction.
    \newline
    
    \textit{(ii) $v$ is part of some cycle in $G$:}
    Let $C_v$ be a shortest cycle for node $v$.
    In this case, we need to show that the value for $\delta (v)$
    computed by the algorithm equals the weight of $C_v$.
    By Lemma~\ref{property-cycle-directed}, there exists 
    $u \in \mathcal{N}_{out}(v)$ and a $j$ such that for some blocker 
    node $q_j \in Q_j$:
    $$wt(C_v) = w(v,u) + \delta^{h_j}(u,q_j) + \delta^{h_j}(q_j,v)$$
    But then this value would have been captured in 
    Step~\ref{algDirected:deltav} of the Algorithm, thus
    $\delta(v) \leq wt(C_v)$.

    Now if the value of $\delta(v)$ is strictly less than $wt(C_v)$,
    then similar to the previous case (Case (i)), we can argue
    that there exists another cycle composed of an edge 
    $(v, u')$, followed by a path from $u'$ to $q_{i^{'}}$ of 
    hop-length at most $h_{i'}$, and another path from $q_{i^{'}}$
    to $v$ of hop-length at most $h_{i'}$ and has total weight
    less than $wt(C_v)$, thus resulting in a contradiction as 
    $C_v$ is a shortest cycle through node $v$.
    This completes the proof.
\end{proof}

In the following lemma, we show that Algorithm~\ref{algDirected}
runs in $\tilde{O}(n)$ rounds.

\begin{lemma}   \label{lemma:algDirected:bound}
Algorithm~\ref{algDirected} computes all $\delta(v)$ values 
deterministically in $\tilde{O}(n)$ rounds.
\end{lemma}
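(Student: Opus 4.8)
The plan is to charge the cost of Algorithm~\ref{algDirected} step by step and then sum over the $O(\log n/\log\log n)$ iterations of its for loop. Step~\ref{algDirected:set} is a purely local initialization and costs $O(1)$ rounds, and by Lemma~\ref{lemma:algBlocker:bound} the construction of the blocker set sequence $\{Q_i\}$ in Step~\ref{algDirected:blocker} costs $\tilde{O}(n)$ rounds. So the argument reduces to showing that a single iteration of the for loop in Steps~\ref{algDirected:startFor}--\ref{algDirected:endFor} costs $\tilde{O}(n)$ rounds.

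For a fixed $i$ I would argue as follows. Steps~\ref{algDirected:inHop} and~\ref{algDirected:outHop} run Bellman-Ford (Algorithms~\ref{algBFIncoming} and~\ref{algBF}) simultaneously from the $|Q_i|$ sources in $Q_i$ for $h_i$ hops; since each of the $h_i$ propagation rounds must push one distance estimate per source along each edge, pipelining these $|Q_i|$ values over each edge gives a cost of $O(|Q_i|\cdot h_i)$ rounds. The crux is to bound $|Q_i|\cdot h_i$ by $\tilde{O}(n)$ for every $i$: for $i=0$ we have $Q_0=V$ and $h_0=\log^2 n$, so $|Q_0|\cdot h_0 = n\log^2 n = \tilde{O}(n)$; for $i\geq 1$, Lemma~\ref{lemma:Qj} gives $|Q_i| = O(n\log n/h_{i-1})$, and since $h_i = h_{i-1}\log n$ the $h_{i-1}$ factors cancel, leaving $|Q_i|\cdot h_i = O(n\log^2 n) = \tilde{O}(n)$. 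In Step~\ref{algDirected:send} each node forwards its $|Q_i|$ incoming-distance values along each of its (at most $n$) incoming edges, which pipelines into $O(|Q_i|)=O(n)$ rounds. The update in Steps~\ref{algDirected:local}--\ref{algDirected:deltav} and the hop-length update in Step~\ref{algDirected:update} use only data already held locally at each node ($v$ knows $w(v,u)$ for $u\in\mathcal{N}_{out}(v)$, the values $\delta^{h_i}(u,q_i)$ just received, and its own $\delta^{h_i}(q_i,v)$), so they cost $O(1)$ rounds. Hence one iteration costs $\tilde{O}(n)$ rounds.

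To conclude I would multiply by the number of iterations: the loop runs $\lceil \log n/\log\log n\rceil = O(\log n/\log\log n)$ times, so its total cost is $O(\log n/\log\log n)\cdot\tilde{O}(n) = \tilde{O}(n)$, and adding the $\tilde{O}(n)$ for Step~\ref{algDirected:blocker} yields an overall bound of $\tilde{O}(n)$ rounds. Determinism is immediate since every subroutine invoked --- Algorithm~\ref{algBlocker}, the two Bellman-Ford procedures, and the neighbor-to-neighbor message exchange --- is deterministic.

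The part I expect to need the most care is not any individual round count but keeping the accumulated polylogarithmic factors polylogarithmic: one must observe that the geometric growth $h_i = h_{i-1}\log n$ from the base value $h_0 = \log^2 n$ is exactly what simultaneously keeps each product $|Q_i|\cdot h_i$ at $\tilde{O}(n)$ (via Lemma~\ref{lemma:Qj}) and caps the number of iterations at $O(\log n/\log\log n)$ (after that many stages $h_i$ already reaches $\Theta(n)$, so all shortest paths fit within the hop bound and no further stages are needed). I would also be careful to justify that running $k$-source $h$-hop Bellman-Ford in CONGEST costs $\Theta(k\cdot h)$ rather than $O(k+h)$, because the $h$ layers of propagation cannot be overlapped across distinct sources without exceeding the edge bandwidth.
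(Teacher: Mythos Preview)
Your proposal is correct and follows essentially the same approach as the paper's own proof: charge Step~\ref{algDirected:blocker} to Lemma~\ref{lemma:algBlocker:bound}, bound each iteration of the for loop by $\tilde{O}(n)$ via $|Q_i|\cdot h_i = O(n\log^2 n)$ (using Lemma~\ref{lemma:Qj}), and multiply by the $O(\log n/\log\log n)$ iterations. Your treatment is in fact slightly more careful than the paper's in that you handle the base case $i=0$ explicitly (where Lemma~\ref{lemma:Qj} does not apply and one uses $|Q_0|=n$, $h_0=\log^2 n$ directly) and spell out why the local steps require no communication.
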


\begin{proof}
    Step~\ref{algDirected:set} is a local step.
    Using Lemma~\ref{lemma:algBlocker:bound}, the blocker set
    sequence $\{Q_i\}$ can be computed deterministically in 
    $\tilde{O}(n)$ rounds.
    We now show that each iteration of the for loop in 
    Steps~\ref{algDirected:startFor}-\ref{algDirected:endFor}
    takes $\tilde{O}(n)$ rounds.

    Using Bellman-Ford algorithm~\cite{B58}, 
    Steps~\ref{algDirected:inHop} \& \ref{algDirected:outHop} takes 
    $O(|Q_i|\cdot h_i) = O(\frac{n\log n}{h_{i-1}}\cdot h_i) = O(n\log^2 n)$ (since $|Q_i| = O(\frac{n\log n}{h_{i-1}})$
    using Lemma~\ref{lemma:Qj}) rounds.
    Step~\ref{algDirected:send} that involves all nodes $u\in V$ to 
    send distance values, $\delta^{h_i}(u,q_i)$ to all of their
    neighbor nodes take 
    $O(|Q_i|) = O(\frac{n\log n}{h_{i-1}})$ rounds.
    Steps~\ref{algDirected:local}-\ref{algDirected:update}
    are local steps and hence do not involve any communication.
    Since there are in total $O(\frac{\log n}{\log \log n})$
    iterations of the for loop, the entire execution of 
    Algorithm~\ref{algDirected} takes 
    $\tilde{O}(n)$ rounds in total. 
\end{proof}

Lemmas~\ref{lemma:algDirected:correctness} and \ref{lemma:algDirected:bound} together lead to the following
theorem.

\begin{theorem}
    There is a deterministic distributed algorithm that computes
    ANSC on an $n$-node directed
    graph with arbitrary non-negative edge weights in 
    $\tilde{O}(n)$ rounds.
\end{theorem}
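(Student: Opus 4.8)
The plan is to exhibit Algorithm~\ref{algDirected} as the desired deterministic algorithm and derive the theorem as an immediate consequence of the two lemmas that precede it. Recall that ANSC asks, for every node $v \in V$, for the weight of the minimum weight cycle containing $v$, with a distinguished value marking the case that no cycle through $v$ exists. The first step is to observe that this is precisely the quantity $\delta(v)$ that Algorithm~\ref{algDirected} stores locally at node $v$ upon termination, with $\delta(v) = \infty$ encoding the absence of any cycle through $v$; in particular no extra aggregation round is needed, since each $\delta(v)$ already resides at $v$.

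For correctness I would appeal directly to Lemma~\ref{lemma:algDirected:correctness}, which states that for every node $v$ the value $\delta(v)$ output by Algorithm~\ref{algDirected} equals the weight of the shortest cycle through $v$ when such a cycle exists, and stays $\infty$ otherwise. Hence the per-node output matches the ANSC specification exactly. For the round complexity I would invoke Lemma~\ref{lemma:algDirected:bound}, which bounds the execution of Algorithm~\ref{algDirected} by $\tilde{O}(n)$ rounds. Determinism follows because every subroutine is deterministic: the construction of the blocker set sequence $\{Q_i\}$ via Algorithm~\ref{algBlocker} (relying on the deterministic blocker set procedure of Corollary~\ref{corollary:Q} and the deterministic $h$-CSSSP construction of Lemma~\ref{lemma:hCSSSP}), the incoming and outgoing Bellman-Ford computations of Algorithms~\ref{algBFIncoming} and~\ref{algBF}, the broadcasts covered by Lemma~\ref{lemma:broadcast}, and the purely local update step.

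There is essentially no new technical obstacle here: the theorem is a corollary of Lemmas~\ref{lemma:algDirected:correctness} and~\ref{lemma:algDirected:bound}. The only points requiring care are bookkeeping ones — confirming that the ANSC output format (one value per node, stored at that node) is literally what Algorithm~\ref{algDirected} produces, and noting that the $\tilde{O}(\cdot)$ notation absorbs the $O(\log n/\log\log n)$ loop iterations together with the per-iteration cost of $O(n)$ times polylogarithmic factors established in the proof of Lemma~\ref{lemma:algDirected:bound}. I would therefore write the proof of the theorem as a single short paragraph combining the two lemmas.
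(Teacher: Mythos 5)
Your proposal matches the paper exactly: the theorem is stated there as an immediate consequence of Lemma~\ref{lemma:algDirected:correctness} (per-node correctness of the $\delta(v)$ values computed by Algorithm~\ref{algDirected}) and Lemma~\ref{lemma:algDirected:bound} (the $\tilde{O}(n)$ round bound), which is precisely the combination you give. Your additional remarks on determinism of the subroutines and the per-node output format are harmless bookkeeping and do not change the argument.
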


\begin{corollary}
    There is a deterministic distributed algorithm that computes
    MWC on an $n$-node directed
    graph with arbitrary non-negative edge weights in 
    $\tilde{O}(n)$ rounds.
\end{corollary}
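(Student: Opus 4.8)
The plan is to derive the minimum weight cycle bound directly from the ANSC theorem just established, using the fact that MWC is simply the global minimum over all the per-node shortest cycle values $\delta(v)$. So the strategy has two parts: first run Algorithm~\ref{algDirected} to compute $\delta(v)$ at every node $v \in V$, which by the preceding theorem takes $\tilde{O}(n)$ rounds deterministically; then perform a global aggregation to compute $\min_{v \in V} \delta(v)$, which is the weight of the minimum weight cycle in $G$.

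First I would invoke the theorem: after running Algorithm~\ref{algDirected}, every node $v$ knows $\delta(v)$, the weight of the shortest cycle through $v$ (or $\infty$ if $v$ lies on no cycle), and this costs $\tilde{O}(n)$ rounds. Next, I would argue correctness of the reduction: a cycle $C$ of minimum weight in $G$ contains at least one node $v^*$, and $C$ is then a shortest cycle through $v^*$, so $wt(C) = \delta(v^*) = \min_{v} \delta(v)$; conversely every finite $\delta(v)$ is the weight of an actual cycle through $v$, so $\min_v \delta(v)$ is a genuine cycle weight and cannot be smaller than the true MWC weight. Hence $\min_v \delta(v)$ equals the MWC weight (and is $\infty$ exactly when $G$ is acyclic). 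Then I would handle the aggregation step: each node holds a single value $\delta(v)$ (one $O(\log n)$-bit word under the model's assumption on edge-weight/distance-value words), so by the broadcasting primitive in Lemma~\ref{lemma:broadcast} with $k_v = 1$ for each $v$, all $n$ values can be disseminated to every node in $O(n)$ rounds, after which each node locally takes the minimum. Summing the two phases gives $\tilde{O}(n) + O(n) = \tilde{O}(n)$ rounds total, and the whole procedure is deterministic since both Algorithm~\ref{algDirected} and the broadcasting primitive are.

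There is essentially no hard obstacle here — the corollary is a routine consequence of the theorem plus a standard aggregation. The only point requiring a line of care is the correctness of the reduction from ANSC to MWC, namely observing that the global minimum weight cycle is realized as a shortest cycle through at least one of its own vertices, so that taking $\min_v \delta(v)$ loses nothing; and noting the boundary case where no cycle exists, in which case all $\delta(v) = \infty$ and the algorithm correctly reports that $G$ has no cycle. If one prefers to avoid Lemma~\ref{lemma:broadcast}, an alternative is a convergecast of the minimum along a BFS tree of $U_G$ built from an arbitrary root, which also runs in $O(n)$ rounds and is deterministic; either way the round bound is $\tilde{O}(n)$.
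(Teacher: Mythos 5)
Your proposal is correct and matches the paper's (implicit) argument: the corollary is stated as an immediate consequence of the ANSC theorem, with the MWC weight obtained as $\min_{v} \delta(v)$ via a standard $O(n)$-round aggregation, exactly as you describe. Your extra care about the acyclic case and the aggregation primitive is fine but adds nothing beyond what the paper intends.
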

\section{Undirected Shortest Cycles}    \label{sec:undirected}

In this section, we describe our deterministic algorithm for 
computing MWC in a given 
undirected graph $G = (V, E)$.
As noted earlier in Section~\ref{sec:intro:undirected}
and has
also been noted in an earlier work on MWC by 
Roditty and Williams~\cite{RW11}, the
problem of computing shortest cycles in undirected graphs is 
much more difficult as compared to the directed graphs.
The reason behind this is that the reduction from MWC to APSP no
longer works in undirected graphs: an edge $(v,u)$ might also be 
the shortest path from $u$ to $v$, and 
$\min_u \{w(v,u) + \delta (u,v) \}$ might be equal to 
$2\cdot w(v,u)$ and not the weight of the shortest cycle 
containing node $v$.

A common approach for computing the value of the minimum weight
cycle in undirected graphs, is by taking the minimum 
across all possible values $\{w(v,u) + \delta(u,w) + \delta(v,w)\}$
where $u \in \mathcal{N}(v)$ and $w \in V$.
In the CONGEST model, this will require an initial computation of
APSP, followed by $O(n)$ communication rounds for each node to
communicate its distance values, $\delta(v,w)$, where $w \in V$, to
all its neighbor nodes.
This approach takes $O(APSP + n)$ rounds in total.
Since the current deterministic bound for APSP is 
$\tilde{O}(n^{4/3})$~\cite{AR20}, this approach would require 
$\tilde{O}(n^{4/3})$ rounds in total.

Similar to our algorithm for computing shortest cycles in
directed graphs described in Section~\ref{sec:directed}, our
algorithm for computing MWC in undirected graphs does not need 
an initial computation of APSP. 
Before further describing our approach for computing MWC in
undirected graphs, we first describe the notion of critical edge 
from Roditty and Williams~\cite{RW11} 
which is critical to obtaining our MWC algorithm.
Specifically given a minimum weight cycle $\mathcal{C}$ in graph $G$, the 
critical edge property states that for every node 
$v \in \mathcal{C}$, 
there exists an edge $(u,w)$ on $\mathcal{C}$ such that the 
disjoint paths from $v$ to $u$ and from $v$ to $w$ are 
both shortest paths in graph $G$.
We now state a simplified version of 
this property from~\cite{RW11} that we use in our algorithms below.

\begin{lemma}[Critical Edge~\cite{RW11}]    \label{lemma:critical-edge}
    Let $G = (V,E)$ be an undirected graph.
    Let $\mathcal{C} = \{v_1, v_2, \ldots, v_{l}\}$
    be a minimum weight cycle in $G$ of weight $wt(\mathcal{C})$ and 
    let $s \in \mathcal{C}$.
    There exists an edge $(v_i, v_{i+1})$ on $\mathcal{C}$ such that
    $\lceil \frac{wt(\mathcal{C})}{2} \rceil$ $-$ $w(v_i, v_{i+1}) \leq  \delta_{\mathcal{C}} (s, v_i) \leq \lfloor \frac{wt(\mathcal{C})}{2} \rfloor$
    and $\lceil \frac{wt(\mathcal{C})}{2} \rceil$ $-$ 
    $w(v_i, v_{i+1}) \leq \delta_{\mathcal{C}} (v_{i+1}, s) \leq \lfloor \frac{wt(\mathcal{C})}{2} \rfloor$.
    Furthermore, $\delta(s,v_i) = \delta_{\mathcal{C}} (s, v_i)$
    and $\delta(v_{i+1}, s) = \delta_{\mathcal{C}} (v_{i+1}, s)$.
\end{lemma}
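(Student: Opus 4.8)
The plan is to walk around the cycle $\mathcal{C}$ starting from $s$ and track the accumulated distance along $\mathcal{C}$ in both directions, using a discrete intermediate-value argument to locate the desired edge. Fix an orientation of $\mathcal{C}$ so that $s = v_1$ and the vertices appear in order $v_1, v_2, \ldots, v_l, v_1$. For each $i$, let $\delta_{\mathcal{C}}(s, v_i)$ denote the weight of the path from $s$ to $v_i$ that goes \emph{forward} along this orientation, and let $\delta_{\mathcal{C}}(v_i, s)$ denote the weight of the complementary path (going forward from $v_i$ back to $s$), so that $\delta_{\mathcal{C}}(s,v_i) + \delta_{\mathcal{C}}(v_i, s) = wt(\mathcal{C})$ for every $i$. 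First I would observe the monotonicity: as $i$ increases from $1$ to $l$, the quantity $\delta_{\mathcal{C}}(s, v_i)$ is nondecreasing, starting at $0$ (for $i=1$) and reaching $wt(\mathcal{C}) - w(v_l, v_1)$ (for $i = l$), with each step increasing by exactly $w(v_i, v_{i+1})$.

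Next I would run the intermediate-value argument. Let $i$ be the largest index such that $\delta_{\mathcal{C}}(s, v_i) \leq \lfloor wt(\mathcal{C})/2 \rfloor$; such an index exists since $i=1$ qualifies. Then $\delta_{\mathcal{C}}(s, v_{i+1}) > \lfloor wt(\mathcal{C})/2 \rfloor$, i.e. $\delta_{\mathcal{C}}(s, v_{i+1}) \geq \lceil wt(\mathcal{C})/2 \rceil$ when $wt(\mathcal{C})$ is odd, or $\geq wt(\mathcal{C})/2 + 1$ — in any case $\delta_{\mathcal{C}}(s, v_{i+1}) \geq \lceil wt(\mathcal{C})/2 \rceil$ unless $wt(\mathcal{C})$ is even and the step lands exactly on $wt(\mathcal{C})/2$, a boundary case I would handle by noting $\delta_{\mathcal{C}}(s,v_i) = \delta_{\mathcal{C}}(s,v_{i+1}) = wt(\mathcal{C})/2$ is then permissible. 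Since $\delta_{\mathcal{C}}(s, v_{i+1}) = \delta_{\mathcal{C}}(s, v_i) + w(v_i, v_{i+1})$, the lower bound $\lceil wt(\mathcal{C})/2 \rceil - w(v_i, v_{i+1}) \leq \delta_{\mathcal{C}}(s, v_i)$ follows immediately. The upper bound $\delta_{\mathcal{C}}(s, v_i) \leq \lfloor wt(\mathcal{C})/2 \rfloor$ holds by choice of $i$. For the $v_{i+1}$-to-$s$ side, I would use $\delta_{\mathcal{C}}(v_{i+1}, s) = wt(\mathcal{C}) - \delta_{\mathcal{C}}(s, v_{i+1}) = wt(\mathcal{C}) - \delta_{\mathcal{C}}(s, v_i) - w(v_i, v_{i+1})$, and substitute the two bounds on $\delta_{\mathcal{C}}(s, v_i)$ together with the identity $\lceil x/2 \rceil + \lfloor x/2 \rfloor = x$ to get the symmetric pair of inequalities.

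Finally I would establish the "furthermore" clause: $\delta(s, v_i) = \delta_{\mathcal{C}}(s, v_i)$ and $\delta(v_{i+1}, s) = \delta_{\mathcal{C}}(v_{i+1}, s)$. Suppose not, say $\delta(s, v_i) < \delta_{\mathcal{C}}(s, v_i)$, witnessed by a shorter $s$–$v_i$ path $P$ in $G$. Concatenating $P$ with the forward arc of $\mathcal{C}$ from $v_i$ back to $s$ (of weight $\delta_{\mathcal{C}}(v_i, s) = wt(\mathcal{C}) - \delta_{\mathcal{C}}(s, v_i)$) yields a closed walk of weight strictly less than $wt(\mathcal{C})$; since all weights are nonnegative, this closed walk contains a cycle of weight at most its own, hence of weight $< wt(\mathcal{C})$ — and that cycle is nontrivial because $\delta_{\mathcal{C}}(s, v_i) \geq \lceil wt(\mathcal{C})/2 \rceil - w(v_i, v_{i+1}) > 0$ (here I would need $wt(\mathcal{C}) > 2w(v_i, v_{i+1})$, which holds since $\mathcal{C}$ being a minimum weight cycle forces it to have at least three edges and no single edge can carry more than... — actually the cleaner route is: the closed walk has positive length because $P$ plus the complementary arc together traverse at least one full edge not cancelled, so it contains a genuine cycle), contradicting minimality of $\mathcal{C}$. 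The symmetric argument handles $\delta(v_{i+1}, s)$. The main obstacle I anticipate is the bookkeeping around the even-weight boundary case and making the "closed walk of weight $< wt(\mathcal{C})$ contains a strictly shorter cycle" step airtight when some edges have weight zero; I would address the latter by arguing on hop-length as a tie-breaker, choosing $\mathcal{C}$ among minimum weight cycles to also have minimum number of edges, which rules out degenerate zero-weight closed sub-walks.
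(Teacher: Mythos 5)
The paper itself gives no proof of this lemma: it is imported, in simplified form, from Roditty and Williams~\cite{RW11}, so there is no in-paper argument to compare against. Your proposal is essentially the standard argument behind the cited statement (prefix sums along the cycle with a maximal-index intermediate-value choice of the critical edge, plus an exchange argument for the ``furthermore'' clause), and in outline it is correct; the algebra for transferring the bounds to $\delta_{\mathcal{C}}(v_{i+1},s)$ via $\lceil x/2\rceil+\lfloor x/2\rfloor=x$ checks out, and the wrap-around case $i=l$ is covered implicitly. Two points need tightening. First, the ``even-weight boundary case'' you worry about cannot actually occur under your choice of $i$ as the \emph{largest} index with $\delta_{\mathcal{C}}(s,v_i)\le\lfloor wt(\mathcal{C})/2\rfloor$ (a landing exactly on $wt(\mathcal{C})/2$ would contradict maximality); what your step from $>\lfloor wt(\mathcal{C})/2\rfloor$ to $\ge\lceil wt(\mathcal{C})/2\rceil$ really uses is integrality of the weights, which is the setting in which the floors and ceilings of~\cite{RW11} make sense and is worth stating, since the surrounding paper allows arbitrary non-negative reals. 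Second, the step ``a closed walk of weight $<wt(\mathcal{C})$ contains a genuine cycle'' should be nailed down, and the clean fix is not your hop-count tie-break: letting $P$ be the allegedly shorter $s$--$v_i$ path and $A$ the complementary arc of $\mathcal{C}$ from $v_i$ to $s$, we have $wt(P)<\delta_{\mathcal{C}}(s,v_i)\le wt(\mathcal{C})-\delta_{\mathcal{C}}(s,v_i)=wt(A)$, so $P\neq A$, their symmetric difference is a nonempty subgraph with all degrees even, hence contains a simple cycle, of weight at most $wt(P)+wt(A)<wt(\mathcal{C})$ by non-negativity --- contradicting minimality, zero-weight edges included. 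Note also that your fallback of choosing $\mathcal{C}$ hop-minimal among minimum weight cycles would, if it were actually needed, establish the lemma only for that particular cycle, whereas the statement (and its use in Lemma~\ref{lemma:undirected}, which applies it to a specific minimum weight cycle passing through a blocker node) requires it for an arbitrary minimum weight cycle; the symmetric-difference argument makes that tie-break unnecessary.
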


Similar to the Lemma~\ref{property-cycle-directed} for 
shortest cycles in directed graphs,
using the above-described notion of critical edge
for undirected cycles we show that
there exists a minimum weight cycle $\mathcal{C}$ in an undirected 
graph $G$ (if $G$ has at least one cycle) 
such that $\mathcal{C}$ can be 
decomposed into two short-hop shortest paths from a node in 
a blocker set in the sequence $\{Q_i\}$, followed by a critical
edge.
Specifically, there exists a critical edge $(u,v) \in \mathcal{C}$
and a $j$ such 
that for some blocker node $q_j \in Q_j$, the cycle $\mathcal{C}$
can be decomposed into three components:
(1) $h_j$-hop shortest path from $q_j$ to $v$, 
(2) $h_j$-hop shortest path from $q_j$ to $u$,
(3) and the critical edge $(u,v)$.
We now establish this observation in the following lemma.

\begin{lemma}   \label{lemma:undirected}
    Let $G = (V,E)$ be an undirected graph 
    that contain at least one cycle and
    let $\{Q_i\}$ be the blocker set sequence for $G$.
    Then there exists a minimum weight cycle $\mathcal{C}$ in $G$
    and a blocker node $q_j \in Q_j$ for some $j$,
    such that 
    $wt(\mathcal{C})$ is equal to the sum of these three components:
    (1) weight of a critical edge $(u,v) \in \mathcal{C}$, $w(u,v)$.
    (2) weight of the $h_j$-hop shortest path from $q_j$ to $u$,
    $\delta^{h_j} (q_j, u)$, and
    (3) weight of the $h_j$-hop shortest path from $q_j$ to $v$,
    $\delta^{h_j} (q_j, v)$.
\end{lemma}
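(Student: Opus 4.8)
The plan is to combine the Critical Edge Lemma (Lemma~\ref{lemma:critical-edge}) with the blocker-set decomposition property (Lemma~\ref{lemma:property:Qj}) in a way closely mirroring the proof of Lemma~\ref{property-cycle-directed} for the directed case. First I would fix a minimum weight cycle $\mathcal{C}$ in $G$ and pick an arbitrary node $s \in \mathcal{C}$; by Lemma~\ref{lemma:critical-edge} there is a critical edge $(v_i, v_{i+1})$ on $\mathcal{C}$ such that the two arcs of $\mathcal{C}$ from $s$ to $v_i$ and from $v_{i+1}$ to $s$ are genuine shortest paths in $G$, i.e. $\delta(s,v_i)=\delta_{\mathcal{C}}(s,v_i)$ and $\delta(v_{i+1},s)=\delta_{\mathcal{C}}(v_{i+1},s)$. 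Relabel $u := v_{i+1}$, $v := v_i$ so that $\mathcal{C}$ is exactly: the shortest path $p_{s,v}$ from $s$ to $v$, then the critical edge $(v,u)$, then the shortest path $p_{u,s}$ from $u$ to $s$, and $wt(\mathcal{C}) = \delta(s,v) + w(u,v) + \delta(u,s)$.

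Next I would apply Lemma~\ref{lemma:property:Qj} to each of the two shortest-path endpoints paired with $s$. Applied to the pair $(s,v)$ it gives some $j_1$ and $q_{j_1} \in Q_{j_1}$ with $\delta(s,v) = \delta^{h_{j_1}}(s,q_{j_1}) + \delta^{h_{j_1}}(q_{j_1},v)$; applied to $(s,u)$ it gives some $j_2$ and $q_{j_2}\in Q_{j_2}$ with $\delta(u,s) = \delta^{h_{j_2}}(u,q_{j_2}) + \delta^{h_{j_2}}(q_{j_2},s)$. The subtlety — and the thing the lemma statement is implicitly asking for — is that we want a \emph{single} blocker node $q_j$ that serves as the "midpoint" of $\mathcal{C}$, with $wt(\mathcal{C}) = w(u,v) + \delta^{h_j}(q_j,u) + \delta^{h_j}(q_j,v)$. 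Here is where the undirectedness of $G$ and the balance guarantee of Lemma~\ref{lemma:critical-edge} are essential: the critical edge splits $\mathcal{C}$ into two arcs each of weight at most $\lfloor wt(\mathcal{C})/2 \rfloor$, so the node $s$ can in fact be taken to be any node on $\mathcal{C}$ and, by choosing $s$ appropriately (or by re-running the argument from the blocker node's perspective), we get that one of the two applications of Lemma~\ref{lemma:property:Qj} already yields a blocker node lying on $\mathcal{C}$ itself, splitting $\mathcal{C}$ into the critical edge plus two short-hop shortest paths emanating from that blocker node. Concretely I would argue: consider the arc of $\mathcal{C}$ of weight $\le \lfloor wt(\mathcal{C})/2\rfloor$; its two endpoints are at graph-distance realized by that arc, apply Lemma~\ref{lemma:property:Qj} to that endpoint pair to obtain a blocker node $q_j$ on a shortest path between them, note that this shortest path may be taken to be the arc itself (since in an undirected graph a shortest path between two nodes on a min-weight cycle, when the cycle is minimum, must coincide with the shorter arc — otherwise we could shortcut and get a lighter cycle), hence $q_j$ lies on $\mathcal{C}$, and then the two sub-arcs of $\mathcal{C}$ from $q_j$ to $u$ and from $q_j$ to $v$ (going around the long way through the other half) are exactly the two $h_j$-hop shortest paths, with the critical edge $(u,v)$ closing the cycle.

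The main obstacle I anticipate is pinning down \emph{which} pair of nodes to feed into Lemma~\ref{lemma:property:Qj} so that the resulting blocker node lies on $\mathcal{C}$ and so that the two resulting short-hop shortest paths together with $(u,v)$ reconstruct all of $\mathcal{C}$ (rather than just a sub-cycle or a cycle with the wrong weight). The balance inequalities in Lemma~\ref{lemma:critical-edge} — that each side has weight between $\lceil wt(\mathcal{C})/2\rceil - w(v_i,v_{i+1})$ and $\lfloor wt(\mathcal{C})/2\rfloor$ — are exactly what prevents degeneracies (e.g. the blocker node coinciding with $u$ or $v$, or one arc being trivial), and I would need to use them carefully, together with the fact that $\mathcal{C}$ is a \emph{minimum} weight cycle (so no chord or shortcut can exist that would let us replace part of an arc by a lighter path). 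Once the correct blocker node $q_j$ on $\mathcal{C}$ is identified, the hop-length bounds follow because $q_j \in Q_j$ was produced as a blocker for $h_{j-1}$-length root-to-leaf paths and the relevant sub-arcs of $\mathcal{C}$, being shortest paths, have hop-length at most $h_j$ by the same telescoping argument as in Lemma~\ref{lemma:property:Qj}; the weight identity $wt(\mathcal{C}) = w(u,v) + \delta^{h_j}(q_j,u) + \delta^{h_j}(q_j,v)$ then follows since the right-hand side describes an actual closed walk through $v$ of weight $\le wt(\mathcal{C})$, and it cannot be strictly smaller because $\mathcal{C}$ is minimum.
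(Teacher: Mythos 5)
You correctly identify the crux — the lemma needs a \emph{single} blocker node $q_j$ from which both cycle arcs to the critical-edge endpoints are realized as $h_j$-hop shortest paths — but your resolution of this crux does not work, and it is precisely where the paper's proof does something different. Your plan applies Lemma~\ref{lemma:property:Qj} to an endpoint pair of one arc (say $(s,v_i)$) to obtain a blocker node $q_j$, and then asserts (a) that $q_j$ can be taken to lie on $\mathcal{C}$ itself, and (b) that the two sub-arcs of $\mathcal{C}$ from $q_j$ to $u$ and from $q_j$ to $v$ are $h_j$-hop shortest paths. Neither step is justified. For (a), Lemma~\ref{lemma:property:Qj} only places $q_j$ on \emph{some} shortest path between $s$ and $v_i$, not on the cycle arc. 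For (b), two independent failures occur: the critical edge of Lemma~\ref{lemma:critical-edge} is defined relative to a chosen source, so once you switch the source from $s$ to $q_j$ the arcs from $q_j$ around the cycle to $u$ and $v$ need not be shortest paths in $G$ at all (you would have to re-invoke Lemma~\ref{lemma:critical-edge} with source $q_j$, which may select a different critical edge); and even when an arc is a shortest path, nothing bounds its hop-length by $h_j$ — the index $j$ you obtained only controls the hop-lengths of the two halves of the $s$-to-$v_i$ path, and your claim that ``the relevant sub-arcs, being shortest paths, have hop-length at most $h_j$ by the same telescoping argument'' is exactly the statement the machinery cannot give: Lemma~\ref{lemma:property:Qj} never bounds the hop-length of a given shortest path, it only finds a splitting blocker node of some (possibly larger) level.

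The paper closes this gap with an extremal argument that your proposal lacks. It chooses $j'$ maximal such that some minimum weight cycle $\mathcal{C}'$ contains a blocker node $q_{j'}\in Q_{j'}$ (well-defined since $Q_0=V$), applies Lemma~\ref{lemma:critical-edge} with $q_{j'}$ itself as the source, and supposes the conclusion fails, i.e.\ $\delta^{h_{j'}}(q_{j'},u)\neq\delta(q_{j'},u)$ for one critical-edge endpoint. Then every shortest $q_{j'}$-to-$u$ path has more than $h_{j'}$ hops, so truncating a minimum-hop one at hop $h_{j'}$ gives a CSSSP path of $\mathcal{T}_{j'}$ that, by Definition~\ref{def:blocker}, contains some $q_{j'+1}\in Q_{j'+1}$; splicing this path into $\mathcal{C}'$ yields another minimum weight cycle containing a level-$(j'+1)$ blocker node, contradicting maximality of $j'$. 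So the single blocker node is not extracted from Lemma~\ref{lemma:property:Qj} at all — it is the extremal-level blocker node already on the cycle, and the short-hop accuracy is forced by the maximality of its level. Without this idea (or an equivalent induction), the step in your plan where the two arcs become $h_j$-hop shortest paths from one blocker node cannot be completed.
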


\begin{proof}
    On the contrary, assume that the above property does not hold 
    for any minimum weight cycles in some graph $G$.
    Let $j'$ be the maximum value of $j$ such that a blocker node
    $q_j$ from the blocker set $Q_j$ lies on a minimum weight 
    cycle in $G$, i.e. no other minimum weight cycles in $G$
    contains a blocker node from any of the blocker sets $Q_j$,
     where $j > j'$.
    Let this minimum weight 
    cycle be $\mathcal{C}'$ and $q_{j'}$ be a blocker node
    from blocker set 
    $Q_{j'}$ such that $q_{j'} \in \mathcal{C}'$. 

    Let $(u,v) \in \mathcal{C}'$ be the critical edge corresponding 
    to source $q_{j'}$.
    Using Lemma~\ref{lemma:critical-edge}, 
    $\delta(q_{j'}, u) = \delta_{\mathcal{C}'}(q_{j'}, u)$ and 
    $\delta(q_{j'}, v) = \delta_{\mathcal{C}'}(q_{j'}, v)$.
    Since the above property does not hold for any minimum weight
    cycle in $G$, at least one of the following conditions does not
    hold:
    
    \begin{enumerate}
        \item[(i)] $\delta^{h_{j'}}(q_{j'}, u) = \delta (q_{j'}, u)$
        \item[(ii)] $\delta^{h_{j'}}(q_{j'}, v) = \delta (q_{j'}, v)$
    \end{enumerate}    

    Without loss of generality assume that 
    $\delta^{h_{j'}}(q_{j'}, u) \neq \delta (q_{j'}, u)$.
    It implies that any shortest path from $q_{j'}$ to $u$ has
    hop length greater than $h_{j'}$.
    Let $p_{q_{j'}, u}$ be a shortest path from $q_{j'}$ to $u$
    that has minimum hop length and let $a$ be the node that is 
    at hop length $h_{j'}$ from node $q_{j'}$ on path 
    $p_{q_{j'}, u}$.

    Consider the path $p_{q_{j'}, a}$ that is part of the 
    collection $\mathcal{T}_{j'}$ constructed for source set 
    $Q_{j'}$ in Step~\ref{algBlocker:CSSSP} of 
    Algorithm~\ref{algBlocker}.
    By the definition of blocker set (Definition~\ref{def:blocker}),
    there exists a node $q_{j'+1} \in Q_{j'+1}$ such that it lies
    on the path $p_{q_{j'}, a}$.

    Consider the cycle $\mathcal{C}^{''}$ obtained by replacing the 
    path from $q_{j'}$ to node $a$ in $\mathcal{C}'$ with the path
    $p_{q_{j'}, a}$ .
    Clearly, the cycle $\mathcal{C}^{''}$ is also a minimum
    weight cycle in $G$.
    Now since the node $q_{j'+1} \in Q_{j'+1}$ lies on 
    $\mathcal{C}^{''}$, it violates our assumption that $j'$ is the 
    maximum value of $j$ such that a node from the blocker set $Q_j$
    lies on a minimum weight cycle in $G$, thus resulting in a 
    contradiction.

    This establishes the lemma.
    
\end{proof}

As noted earlier in the section, for a given source node $s$
and an edge $(u,v)$ the value 
$\{w(u,v) + \delta(s,u) + \delta(s,v)\}$ may not necessarily 
correspond to the weight of a cycle involving edge $(u,v)$.
It may be possible that either the shortest path from $s$ to $u$
or from $s$ to $v$ in itself contains edge $(u,v)$.
To differentiate such scenarios, we define a set 
$\mathcal{A}_{s,v} \subseteq \mathcal{N}(v)$ such that it contains
all nodes $u \in \mathcal{N}(v)$ such that the 
computed shortest paths from
$s$ to $u$ and from $s$ to $v$ does not contain edge $(u,v)$.
In a similar vein, we define the notion of $\mathcal{A}^h_{s,v}$
when we deal with $h$-hop shortest paths. Formally,
we define the sets $\mathcal{A}_{s,v}$ and $\mathcal{A}^h_{s,v}$
below:
$$\mathcal{A}_{s,v} = \{u | u\in \mathcal{N}(v), parent(s,v) \neq u 
 \text{\hspace{5pt} \& \hspace{5pt}}   parent(s,u) \neq v \}$$
$$\mathcal{A}^h_{s,v} = \{u | u\in \mathcal{N}(v), parent^h(s,v) \neq u  \text{\hspace{5pt} \& \hspace{5pt}} parent^h(s,u) \neq v \}$$

Given a source node $s \in V$, a node $v \in V$ can construct 
the set $\mathcal{A}^h_{s,v}$ locally, if it knows the 
$h$-hop shortest path distance values, $\delta^{h}(s,u)$, and the 
corresponding parent values, $parent^{h}(s,u)$, for all its 
neighbor nodes $u \in \mathcal{N}(v)$.
This leads to the following lemma.

\begin{lemma}
Let $G = (V,E)$ be an undirected graph.
Let $s \in V$ be a source node and let $h > 0$.
If every node $v \in V$ knows the $h$-hop shortest path distance
values, $\delta^{h}(s,u)$, and the corresponding parent values,
$parent^{h}(s,u)$, for all neighbor nodes $u \in \mathcal{N}(v).$
Then node $v$ can compute the set $\mathcal{A}^h_{s,v}$ locally
without the need of any further communication.
\end{lemma}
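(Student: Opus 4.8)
The plan is to show that the set $\mathcal{A}^h_{s,v}$, as defined, depends only on information that node $v$ can hold locally after the stated hypothesis is met, so no communication beyond what is assumed is required. Recall the definition
$$\mathcal{A}^h_{s,v} = \{u \mid u \in \mathcal{N}(v),\ parent^h(s,v) \neq u \ \&\ parent^h(s,u) \neq v \}.$$
First I would observe that node $v$ trivially knows $\mathcal{N}(v)$, since in the CONGEST model each node knows its incident edges. Next, $v$ knows $parent^h(s,v)$: this is computed and stored at $v$ itself during the run of the Bellman-Ford algorithm (Algorithm~\ref{algBF}), which maintains $parent^h(s,\cdot)$ at the node in question. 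Finally, by the hypothesis of the lemma, $v$ has received the value $parent^h(s,u)$ for every neighbor $u \in \mathcal{N}(v)$.

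The key step is then simply that determining membership of a given $u \in \mathcal{N}(v)$ in $\mathcal{A}^h_{s,v}$ requires only checking the two inequalities $parent^h(s,v) \neq u$ and $parent^h(s,u) \neq v$, both of which are comparisons among values already resident at $v$. Since the model grants each node infinite local computational power, $v$ can iterate over all $u \in \mathcal{N}(v)$, perform these constant-time comparisons, and assemble $\mathcal{A}^h_{s,v}$ with no message exchange. I would present this as a short direct argument rather than a proof by contradiction.

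I do not anticipate a genuine obstacle here; the only point that merits a sentence of care is making explicit that $parent^h(s,v)$ is available at $v$ without extra communication — i.e., it is a byproduct of the shortest-path computation already performed, not something that must be separately broadcast. One might also note for completeness that the hypothesis as stated mentions $\delta^h(s,u)$ values in addition to the parent values, but the construction of $\mathcal{A}^h_{s,v}$ actually uses only the parent pointers; I would remark that the distance values are listed because they are needed for the surrounding MWC computation, not for this particular local step. With these observations the lemma follows immediately.
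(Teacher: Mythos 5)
Your proof is correct and matches the paper's reasoning: the paper states this lemma without a separate proof, treating it as immediate from the definition of $\mathcal{A}^h_{s,v}$, exactly as you argue — membership tests use only $\mathcal{N}(v)$, $parent^h(s,v)$, and the neighbors' $parent^h(s,u)$ values, all locally available. Your side remark that the distance values in the hypothesis are not actually needed for this particular local construction is accurate and a fair observation.
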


We now define below 
some new notations in order to accurately capture the 
set of cycles in graph $G$
such that they have a particular edge, say edge $(v,u)$, as the 
critical edge.

\begin{definition}
Let $G = (V,E)$ be an undirected graph and let
$\{Q_i\}$ be the blocker set sequence for $G$.
For a node $v \in V$, we define the following collections of 
cycles: 
\begin{enumerate}
    \item $\mathcal{C}^{h}_{s,v,u}$: collection of cycles $C$
in $G$ such that edge $(v,u)$ is the critical edge in $C$ with
node $s$ as the source node and the paths from $s$ to $v$ and 
from $s$ to $u$ are the $h$-hop shortest paths.
    \item $\mathcal{C}^{h}_{s,v}$: union of all sets 
$\mathcal{C}^{h}_{s,v,u}$ such that $u \in \mathcal{A}^h_{s,v}$.
    \item $\mathcal{C}^{h}_{Q_i,v}$: union of all sets 
    $\mathcal{C}^{h}_{q_j,v}$ such that $q_j \in Q_j$, where $Q_j$ is a blocker set in $\{Q_i\}$.
    \item $\mathcal{C}_{v}$: union of all sets $\mathcal{C}^{h_i}_{Q_i,v}$ for all 
    $0 \leq i \leq \lceil \frac{\log n}{\log \log n} \rceil - 1$
\end{enumerate}
\end{definition}

Algorithm~\ref{algUndirected} describes our deterministic algorithm
for computing the weight of the global minimum weight cycle in a 
given undirected graph $G = (V,E)$.
Let $\delta_{MWC}(v)$ refers to the minimum weight across all the 
cycles in the collection $\mathcal{C}_{v}$.
In Step~\ref{algUndirected:set} we set the initial hop length $h_0$
to $\log^2 n$ and we also set the value $\delta_{MWC}(v)$ to 
$\infty$.
We then compute the deterministic blocker set sequence $\{Q_i\}$
for graph $G$ using the algorithm described in 
Section~\ref{sec:blocker} in Step~\ref{algUndirected:blocker}.
The for loop in 
Steps~\ref{algUndirected:startFor}-\ref{algUndirected:endFor}
runs for $\lceil \frac{\log n}{\log \log n} \rceil$ iterations in
total.

\begin{algorithm}
\caption{Computing Undirected Minimum Weight Cycle}
\begin{algorithmic}[1]
\Statex \textbf{Input:} Undirected Graph $G = (V, E)$;
\State Set $h_0 \leftarrow \log^2 n$; $\delta_{MWC}(v) \leftarrow \infty$ \label{algUndirected:set}
\State Compute deterministic blocker set sequence $\{Q_i\}$ for $G$ using Algorithm~\ref{algBlocker}. \label{algUndirected:blocker}
\For{$0 \leq i \leq \lceil \frac{\log n}{\log \log n} \rceil - 1$}  \label{algUndirected:startFor}
\State $\begin{aligned}[t]
    & \text{Compute $h_i$-hop SSSP for all source nodes $q_i \in Q_i$ using Bellman-Ford algorithm~\cite{B58}} \\
    & \text{(Algorithm~\ref{algBF}).}
    \end{aligned}$ \label{algUndirected:outHop}
\State $\begin{aligned}[t]
    & \text{Each node $u \in V$ sends to all its neighbors $v \in \mathcal{N}(u)$, the $h_i$-hop SSSP distance values,} \\
    & \text{$\delta^{h_i}(q_i, u)$, along with the corresponding parent values, $parent^{h_i} (q_i, u)$ for all source nodes} \\
    & \text{$q_i \in Q_i$.}
    \end{aligned}$ \label{algUndirected:send}
\State \textbf{Local Step at each node $\bm{v \in V}$:} \label{algUndirected:local}
\State \hspace{10pt} \textbf{For each blocker node $q_i \in Q_i$:}
\State \hspace{20pt} Construct set $\mathcal{A}^{h_i}_{q_i, v}$ locally.    \label{algUndirected:construct}
\State \hspace{20pt} $\delta_{MWC}(v) \leftarrow \min \{\delta_{MWC} (v), \text{\hspace{5pt}}\min_{u \in \mathcal{A}^{h_i}_{q_i, v}} (w(v,u) + \delta^{h_i}(q_i, u) + \delta^{h_i}(q_i,v)) \} $ \label{algUndirected:deltav}
\State $h_i \leftarrow h_{i-1}\cdot \log n$ \label{algUndirected:update}
\EndFor \label{algUndirected:endFor}
\State Every node $v \in V$ broadcast their local value of $\delta_{MWC}(v)$ to the entire network. \label{algUndirected:broadcast}
\State Locally compute global minimum weight cycle by taking the minimum of all received values in Step~\ref{algUndirected:broadcast}.  \label{algUndirected:minimum}
\end{algorithmic}  \label{algUndirected}
\end{algorithm} 

Each iteration $i$ of the for loop in 
Steps~\ref{algUndirected:startFor}-\ref{algUndirected:endFor}
proceeds in the following manner:
Step~\ref{algUndirected:outHop} computes the $h_i$-hop shortest 
path distance values and the corresponding parent values from all 
source nodes $q_i$ in the blocker
set $Q_i$ using Bellman-Ford algorithm~\cite{B58}(also described
in Algorithm~\ref{algBF}).
Then in Step~\ref{algUndirected:send} every node $u \in V$ sends 
all the computed $h_i$-hop shortest path distance values, 
$\delta^{h_i}(q_i, u)$, and their corresponding parent values, 
$parent^{h_i} (q_i, u)$ for all blocker nodes $q_i \in Q_i$
to all its neighbor nodes $u \in \mathcal{N}(v)$.

Using the received shortest path distance values, 
$\delta^{h_i}(q_i, u)$, and the corresponding parent values, 
$parent^{h_i} (q_i, u)$, for all blocker nodes $q_i \in Q_i$,
in Step~\ref{algUndirected:construct}
every node $v \in V$ then constructs the set 
$\mathcal{A}^{h_i}_{q_i, v}$ locally for each blocker node 
$q_i$ separately.
For each blocker node $q_i \in Q_i$, node $v \in V$ then computes
the weight of candidate shortest cycles with edge $(v,u)$ as the 
critical edge for source node $q_i$, where 
$u \in \mathcal{A}^{h_i}_{q_i, v}$.
These weights can be computed by summing up the following three
components:
(1) weight of the critical edge $(v,u)$,
(2) followed by the $h_i$-hop shortest path distance value, 
$\delta^{h_i}(q_i, v)$, from $q_i$ to $v$, and
(3) the $h_i$-hop shortest path distance value, 
$\delta^{h_i}(q_i, u)$, from $q_i$ to $u$.
It then updates the value of $\delta_{MWC}(v)$, if the weight of
one of these candidate cycles is smaller than the current value of
$\delta_{MWC}(v)$ (Step~\ref{algUndirected:deltav}).

Finally, after all nodes in $V$ have computed their local values 
for $\delta_{MWC}(v)$, each node then broadcast its locally computed
value for $\delta_{MWC}(v)$ to the entire network in 
Step~\ref{algUndirected:broadcast} in $O(n)$ rounds.
Using all the $\delta_{MWC}(v)$ values received in 
Step~\ref{algUndirected:broadcast}, every node in the network can 
now compute the weight of the minimum weight cycle by taking the 
minimum of all the received $\delta_{MWC}(v)$ values 
(Step~\ref{algUndirected:minimum}).

We now show that Algorithm~\ref{algUndirected} correctly computes
the weight of the minimum weight cycle in a given undirected 
graph $G = (V,E)$, if a cycle exists in $G$.

\begin{lemma}   \label{lemma:algUndirected:correctness}
    Let $G = (V, E)$ be an undirected graph.
    Then Algorithm~\ref{algUndirected} correctly computes the 
    weight of the minimum weight cycle in $G$, if there exists a
    cycle in $G$.
\end{lemma}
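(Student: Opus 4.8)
The plan is to show two things: (a) every value that the algorithm assigns to some $\delta_{MWC}(v)$ is the weight of an actual cycle in $G$ passing through the edge used as critical edge, so the final global minimum is at most the weight of some genuine cycle, hence at least $wt(\mathcal{C})$ where $\mathcal{C}$ is a minimum weight cycle; and (b) there is at least one $v$ for which the algorithm's value $\delta_{MWC}(v)$ is exactly $wt(\mathcal{C})$, so the final minimum is at most $wt(\mathcal{C})$. Together these give that the algorithm outputs exactly $wt(\mathcal{C})$.

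For the lower-bound direction (a), I would fix any iteration $i$, any blocker node $q_i\in Q_i$, any node $v$, and any $u\in\mathcal{A}^{h_i}_{q_i,v}$ that is used to update $\delta_{MWC}(v)$ in Step~\ref{algUndirected:deltav}. By definition of $\mathcal{A}^{h_i}_{q_i,v}$, we have $parent^{h_i}(q_i,v)\neq u$ and $parent^{h_i}(q_i,u)\neq v$; combined with the consistency/tree structure of the $h_i$-hop shortest paths computed by Bellman–Ford (the $h$-CSSSP property), this guarantees that the computed $h_i$-hop shortest path $P_{q_i,v}$ from $q_i$ to $v$ and the computed $h_i$-hop shortest path $P_{q_i,u}$ from $q_i$ to $u$ together with the edge $(u,v)$ form a closed walk that actually contains a cycle through $v$ — and in fact (again using tree consistency, so the two paths share a common prefix from $q_i$ and then split) closing them with $(u,v)$ yields a cycle whose weight is at most $\delta^{h_i}(q_i,v)+\delta^{h_i}(q_i,u)+w(v,u)$. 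Hence every update value dominates a real cycle weight, so $\min_v \delta_{MWC}(v)\geq wt(\mathcal{C})$, and after the broadcast in Step~\ref{algUndirected:broadcast} and the local minimization in Step~\ref{algUndirected:minimum} every node computes this same value $\geq wt(\mathcal{C})$.

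For the upper-bound direction (b), I would invoke Lemma~\ref{lemma:undirected}: there is a minimum weight cycle $\mathcal{C}$, an index $j$, a blocker node $q_j\in Q_j$, and a critical edge $(u,v)\in\mathcal{C}$ such that $wt(\mathcal{C})=w(u,v)+\delta^{h_j}(q_j,u)+\delta^{h_j}(q_j,v)$, where the $h_j$-hop distances are exact (this is exactly what the proof of Lemma~\ref{lemma:undirected} gives, via the critical-edge Lemma~\ref{lemma:critical-edge} which says the $\mathcal{C}$-restricted distances equal the true distances and have hop-length at most $\lceil wt(\mathcal{C})/2\rceil\le h_j$). It remains to check that when iteration $i=j$ runs, node $v$ actually considers $u$, i.e. that $u\in\mathcal{A}^{h_j}_{q_j,v}$. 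This is where the main care is needed: I must argue that the \emph{computed} $h_j$-hop shortest path from $q_j$ to $v$ does not use edge $(u,v)$ and the computed one from $q_j$ to $u$ does not use edge $(u,v)$ either. Since by Lemma~\ref{lemma:critical-edge} the true shortest path from $q_j$ to $v$ has weight $\delta_{\mathcal{C}}(q_j,v)=\lfloor wt(\mathcal{C})/2\rfloor$ or so, which is \emph{strictly} less than $wt(\mathcal{C})-w(u,v)$ unless $w(u,v)$ is tiny — one shows any shortest path from $q_j$ to $v$ through $u$ would have to traverse $(u,v)$ and then $\delta(q_j,u)\le\delta(q_j,v)-w(u,v)$, and symmetrically, and the critical-edge inequalities $\lceil wt(\mathcal C)/2\rceil - w(u,v)\le \delta_{\mathcal C}(q_j,u),\delta_{\mathcal C}(q_j,v)\le\lfloor wt(\mathcal C)/2\rfloor$ preclude both a path $q_j\to u$ ending with $(v,u)$ and a path $q_j\to v$ ending with $(u,v)$ being shortest (such a path would force $\delta(q_j,v)\le \delta(q_j,u)-w(u,v)$ combined with $\delta(q_j,u)\le\delta(q_j,v)-w(u,v)$, impossible for $w(u,v)>0$; for $w(u,v)=0$ one breaks ties consistently using the hop-minimal choice as in the proof of Lemma~\ref{lemma:property:Qj}, so $u\in\mathcal{A}^{h_j}_{q_j,v}$ can be assumed). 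Granting this, Step~\ref{algUndirected:deltav} at node $v$ in iteration $j$ sets $\delta_{MWC}(v)\le w(v,u)+\delta^{h_j}(q_j,u)+\delta^{h_j}(q_j,v)=wt(\mathcal{C})$, and combined with (a) it equals $wt(\mathcal{C})$; the global broadcast then makes every node output $wt(\mathcal{C})$.

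The main obstacle I anticipate is precisely the $\mathcal{A}^{h_j}_{q_j,v}$ membership check in part (b) — i.e. reconciling the \emph{combinatorial} decomposition guaranteed by Lemma~\ref{lemma:undirected} with the fact that the algorithm only gets to use whatever $h_j$-hop paths Bellman–Ford happened to compute, and that these must be the "disjoint" ones avoiding the critical edge. Handling the zero-weight-edge tie-breaking cleanly (so that the chosen shortest paths and parent pointers are consistent with the hop-minimal cycle $\mathcal{C}$ from Lemma~\ref{lemma:undirected}) is the delicate point; everything else — the round-by-round bookkeeping, the broadcast, and the "no spurious cycle" argument of part (a) — is routine.
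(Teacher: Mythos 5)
Your overall structure is the same as the paper's: a ``no spurious value'' direction showing every update in Step~\ref{algUndirected:deltav} dominates the weight of a genuine cycle (the paper uses this both for the acyclic case and to rule out $\delta_{MWC}<wt(\mathcal{C})$), and an upper-bound direction that invokes Lemma~\ref{lemma:undirected} to exhibit a blocker node $q_j$, a critical edge $(u,v)$, and the decomposition $wt(\mathcal{C})=w(u,v)+\delta^{h_j}(q_j,u)+\delta^{h_j}(q_j,v)$, then argues this candidate is actually examined at node $v$. Your part (a) is in fact slightly more careful than the paper's (you note the two computed paths plus $(u,v)$ form a closed walk from which a cycle of no larger weight can be extracted, where the paper simply ``combines'' the paths into a cycle), and both arguments are sound because any extracted cycle already has weight at least $wt(\mathcal{C})$.

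The problem is in the step you yourself flag as delicate, the membership $u\in\mathcal{A}^{h_j}_{q_j,v}$. Membership requires \emph{both} $parent^{h_j}(q_j,v)\neq u$ \emph{and} $parent^{h_j}(q_j,u)\neq v$, i.e.\ neither computed path may end with the critical edge. Your argument derives a contradiction only from the conjunction of the two bad events ($\delta(q_j,u)\leq\delta(q_j,v)-w(u,v)$ together with $\delta(q_j,v)\leq\delta(q_j,u)-w(u,v)$), so it only shows the two cannot hold simultaneously when $w(u,v)>0$; it does not exclude one of them holding alone, which already destroys membership. And one of them can hold alone even with strictly positive weights: the critical-edge inequalities permit $\delta(q_j,v)=\delta(q_j,u)+w(u,v)$ (e.g.\ a unit-weight $4$-cycle with $q_j$ adjacent to $u$: the path through the critical edge ties with the path around the other side, and Bellman--Ford may set $parent^{h_j}(q_j,v)=u$). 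In that event the candidate $w(v,u)+\delta^{h_j}(q_j,u)+\delta^{h_j}(q_j,v)$ is not considered at $v$ via $u$, and recovering $wt(\mathcal{C})$ requires a further argument (e.g.\ that the tie yields an alternative minimum weight cycle or an alternative critical edge whose paths avoid it, or a consistent tie-breaking rule imposed on the computed paths); your appeal to hop-minimal tie-breaking is stated only for the $w(u,v)=0$ case and is not developed. To be fair, the paper does not do better here: it simply asserts that $(u,v)$ is not on the computed $h_j$-hop shortest paths from $q_j$ to $u$ and to $v$, so this gap is shared with the paper's own proof rather than introduced by you; but as written your justification proves a weaker statement (``not both'') than the one you need (``neither'').
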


\begin{proof}
    We consider two cases:
    (i) when there is no cycle in $G$
    (ii) there exists at least one cycle in $G$.

    \textit{(i) there is no cycle in graph $G$:}
    In this case, we need to show that the value for 
    $\delta_{MWC}(v)$ remains at $\infty$ throughout the execution
    of the algorithm, for all nodes $v \in V$.
    On the contrary assume that $\delta_{MWC}(v)$ was instead set to
    some finite values for some nodes.
    Let $v$ be such a node.

    The only step where the $\delta_{MWC}(v)$ value is  updated is
    Step~\ref{algUndirected:deltav}, where it is updated to value
    $w(v,u) + \delta^{h_i}(q_i, u) + \delta^{h_i}(q_i,v)$, for
    some blocker node $q_i \in Q_i$ and 
    $u \in \mathcal{A}^{h_i}_{q_i, v}$.
    It further implies that both distance values, 
    $\delta^{h_i}(q_i, u)$ and $\delta^{h_i}(q_i,v)$, are finite.
    Let $p_{q_i, u}$ be the path that corresponds to the 
    distance value $\delta^{h_i}(q_i, u)$ and let 
    path $p_{q_i, v}$ corresponds to the distance value 
    $\delta^{h_i}(q_i,v)$.
    Since $u \in \mathcal{A}^{h_i}_{q_i, v}$, edge $(u,v)$ is not
    present on both paths $p_{q_i, u}$ and $p_{q_i, v}$, and hence
    we can generate a cycle in $G$ by combining the edge 
    $(v,u)$, followed by paths 
    $p_{q_i, u}$ and $p_{q_i, v}$, thus
    resulting in a contradiction.
    \newline

    \textit{(ii) there exists at least one cycle in $G$:}
    In this case we need to show that the value for $\delta_{MWC}$
    computed by the algorithm equals the weight of 
    a minimum weight cycle in $G$.
    By Lemma~\ref{lemma:undirected}, there exists a minimum weight
    cycle $C$ in $G$ such that for some blocker node $q_j \in Q_j$
    and the critical edge $(u,v) \in C$ for source $q_j$, the 
    following holds:
    $$wt(C) = w(u,v) + \delta^{h_j}(q_j, u) + \delta^{h_j}(q_j,v)$$

    Now since edge $(u,v)$ is not part of the $h_j$-hop shortest
    paths from $q_j$ to nodes $u$ and $v$, 
    $u \in \mathcal{A}^{h_j}_{q_j, v}$ and
    $v \in \mathcal{A}^{h_j}_{q_j, u}$.
    Hence the above value for $wt(C)$ would have been captured in 
    Step~\ref{algUndirected:deltav} of the Algorithm at nodes $u$
    and $v$, thus $\delta_{MWC} \leq wt(C)$.

    Now if the value of $\delta_{MWC}$ is strictly less than $w(C)$,
    then similar to the previous case (Case(i)), we can argue that
    there exists another cycle composed of a critical edge $(u,v)$,
    followed by a path from a blocker node $q_{i'}$ to $v$ of
    hop-length at most $h_{i'}$, and another path from $q_{i'}$ to 
    $v$ of hop-length at most $h_{i'}$ and has total weight less
    than $wt(C)$.
    This results in a contradiction as $C$ is a minimum weight cycle
    in $G$.
    This completes the proof.
\end{proof}

In the following lemma, we show that Algorithm~\ref{algUndirected}
runs in total $\tilde{O}(n)$ rounds.

\begin{lemma}   \label{lemma:algUndirected:bound}
    Algorithm~\ref{algUndirected} computes $\delta_{MWC}$
    deterministically in $\tilde{O}(n)$ rounds.
\end{lemma}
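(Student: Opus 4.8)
The plan is to tally the round complexity of Algorithm~\ref{algUndirected} step by step, exactly as in the proof of Lemma~\ref{lemma:algDirected:bound}, and to verify that every contribution is $\tilde{O}(n)$. First I would observe that Step~\ref{algUndirected:set} is purely local, and that by Lemma~\ref{lemma:algBlocker:bound} the blocker set sequence $\{Q_i\}$ is built in Step~\ref{algUndirected:blocker} deterministically in $\tilde{O}(n)$ rounds, which is within budget.

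Next I would bound a single iteration $i$ of the for loop. Step~\ref{algUndirected:outHop} runs Bellman-Ford (Algorithm~\ref{algBF}) once per source $q_i \in Q_i$ with hop bound $h_i$, so at $O(h_i)$ rounds per source this costs $O(|Q_i|\cdot h_i)$ rounds; invoking Lemma~\ref{lemma:Qj} gives $|Q_i| = O(n\log n/h_{i-1})$ for $i \ge 1$, and since $h_i = h_{i-1}\log n$ this is $O(n\log^2 n)$, while for $i = 0$ we have $Q_0 = V$ and $h_0 = \log^2 n$, giving $O(n\log^2 n)$ again. Step~\ref{algUndirected:send} has each node $u$ send to each neighbor the $|Q_i|$ distance values $\delta^{h_i}(q_i,u)$ together with the $|Q_i|$ parent IDs $parent^{h_i}(q_i,u)$; since the CONGEST model permits a constant number of distance values and node IDs per edge per round, this costs $O(|Q_i|)$ rounds, which is $O(n\log n/h_{i-1})$ for $i \ge 1$ and $O(n)$ for $i = 0$. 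The remaining Steps~\ref{algUndirected:local}--\ref{algUndirected:update} (local construction of $\mathcal{A}^{h_i}_{q_i,v}$, the update of $\delta_{MWC}(v)$, and the update of $h_i$) involve no communication. Summing over the $O(\log n/\log\log n)$ iterations, the for loop costs $\tilde{O}(n)$ rounds.

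Finally, Step~\ref{algUndirected:broadcast} has each node broadcast the single value $\delta_{MWC}(v)$, so $K = n$ and Lemma~\ref{lemma:broadcast} gives $O(n)$ rounds, and Step~\ref{algUndirected:minimum} is local. Adding everything, the total is $\tilde{O}(n)$, and since every ingredient used (Algorithm~\ref{algBlocker}, Bellman-Ford, the broadcasting primitive, and the local steps) is deterministic, so is the whole algorithm. The only place that needs a little care — and the closest thing to an obstacle — is checking that the per-edge message volume in Step~\ref{algUndirected:send} is genuinely $O(|Q_i|)$ words (both distances and node IDs fit in $O(\log n)$ bits) and that the $i = 0$ iteration, where $|Q_0| = n$, still fits inside the $\tilde{O}(n)$ bound; both checks go through as above.
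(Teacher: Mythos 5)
Your proposal is correct and follows essentially the same step-by-step accounting as the paper's own proof: blocker set sequence via Lemma~\ref{lemma:algBlocker:bound}, $O(|Q_i|\cdot h_i)=O(n\log^2 n)$ per iteration for Bellman-Ford using Lemma~\ref{lemma:Qj}, $O(|Q_i|)$ rounds for the neighbor exchange, $O(n)$ for the final broadcast, summed over $O(\log n/\log\log n)$ iterations. Your explicit handling of the $i=0$ iteration (where $Q_0=V$ and $h_0=\log^2 n$) and of the per-edge message volume in Step~\ref{algUndirected:send} is a small extra care the paper leaves implicit, but it does not change the argument.
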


\begin{proof}
    Step~\ref{algUndirected:set} is a local step.
    Using Lemma~\ref{lemma:algBlocker:bound}, the blocker set
    sequence $\{Q_i\}$ can be computed deterministically in 
    $\tilde{O}(n)$ rounds.
    Step~\ref{algUndirected:broadcast} takes $O(n)$ rounds using
    Lemma~\ref{lemma:broadcast} and Step~\ref{algUndirected:minimum}
    is a local step.
    We now show that each iteration of the for loop in 
    Steps~\ref{algUndirected:startFor}-\ref{algUndirected:endFor}
    takes $\tilde{O}(n)$ rounds.

    Using Bellman-Ford algorithm~\cite{B58}, 
    Step~\ref{algUndirected:outHop} takes $O(|Q_i|\cdot h_i) = $
    $O(\frac{n\log n}{h_{i-1}}\cdot h_i) = O(n\log^2 n)$
    (since $|Q_i| = O(\frac{n\log n}{h_{i-1}})$ using 
    Lemma~\ref{lemma:Qj}) rounds.
    Step~\ref{algUndirected:send} that involves all nodes 
    $u \in V$ to send distance values, $\delta^{h_i}(q_i, u)$,
    and the parent values, $parent^{h_i}(q_i, u)$, to all of their
    neighbor nodes takes $O(|Q_i|) = O(\frac{n\log n}{h_{i-1}})$
    rounds.
    Steps~\ref{algUndirected:local}-\ref{algUndirected:endFor}
    are local steps and hence do not involve any communication.
    Since there are in total $O(\frac{\log n}{\log\log n})$
    iterations of the for loop, the entire execution of 
    Algorithm~\ref{algUndirected} takes $\tilde{O}(n)$ rounds in 
    total.
\end{proof}

Lemmas~\ref{lemma:algUndirected:correctness} and
\ref{lemma:algUndirected:bound} together lead to the following
theorem.

\begin{theorem}
    There is a deterministic distributed algorithm that computes
    MWC on an $n$-node undirected graph with
    arbitrary non-negative edge weights in $\tilde{O}(n)$ rounds.
\end{theorem}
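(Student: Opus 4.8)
The plan is to exhibit Algorithm~\ref{algUndirected} as the desired algorithm and then verify the three properties it must satisfy: it is deterministic, it is correct, and it runs in $\tilde{O}(n)$ rounds. The last two of these are exactly the content of Lemmas~\ref{lemma:algUndirected:correctness} and \ref{lemma:algUndirected:bound}, so the work of the proof is to assemble them. For determinism, I would observe that every subroutine invoked by Algorithm~\ref{algUndirected} is deterministic: the blocker set sequence $\{Q_i\}$ is produced by Algorithm~\ref{algBlocker}, which is deterministic by Lemma~\ref{lemma:algBlocker:bound}; the $h_i$-hop SSSP computations in Step~\ref{algUndirected:outHop} use the Bellman-Ford algorithm, which is deterministic; the construction of each set $\mathcal{A}^{h_i}_{q_i,v}$ is a purely local computation from received distance and parent values; and the broadcasts in Steps~\ref{algUndirected:send} and \ref{algUndirected:broadcast} and the final minimization in Step~\ref{algUndirected:minimum} use only the standard broadcast primitive of Lemma~\ref{lemma:broadcast} together with local work. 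Hence the whole algorithm is deterministic.

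For correctness I would invoke Lemma~\ref{lemma:algUndirected:correctness}: if $G$ is acyclic then every local value $\delta_{MWC}(v)$ stays at $\infty$, so the global value output in Step~\ref{algUndirected:minimum} is $\infty$; and if $G$ contains a cycle then this output equals the weight of a minimum weight cycle of $G$. The real substance — and the step I expect to be the main obstacle — sits inside that lemma, namely the structural decomposition of Lemma~\ref{lemma:undirected}: one must show that some minimum weight cycle $\mathcal{C}$ admits a blocker node $q_j\in Q_j$ and a critical edge $(u,v)\in\mathcal{C}$ with $wt(\mathcal{C}) = w(u,v) + \delta^{h_j}(q_j,u) + \delta^{h_j}(q_j,v)$, so that the weight of $\mathcal{C}$ is actually captured in Step~\ref{algUndirected:deltav}. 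This rests on the critical edge property of Roditty--Williams (Lemma~\ref{lemma:critical-edge}), which guarantees that for a source lying on $\mathcal{C}$ the two disjoint half-paths to the endpoints of some edge are genuine shortest paths of $G$, combined with the ``every sufficiently long shortest path is intersected by a blocker node'' property of the sequence $\{Q_i\}$ (Definition~\ref{def:blocker} and Lemma~\ref{lemma:property:Qj}). The argument proceeds by taking, for contradiction, the largest index $j'$ such that a node of $Q_{j'}$ lies on some minimum weight cycle; if a half-path from that blocker node is longer than $h_{j'}$ hops, one splices in the $h_{j'}$-hop prefix of that path, which by the blocker property contains a node of $Q_{j'+1}$, producing a minimum weight cycle through a blocker set of strictly higher index and contradicting maximality. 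I would also note that the $\mathcal{A}^{h_i}_{q_i,v}$ bookkeeping is exactly what rules out the ``degenerate'' combinations $w(v,u)+\delta^{h_i}(q_i,u)+\delta^{h_i}(q_i,v)$ that do not correspond to an actual cycle, which is the subtlety distinguishing the undirected case from the directed one.

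For the round bound I would invoke Lemma~\ref{lemma:algUndirected:bound}: computing $\{Q_i\}$ costs $\tilde{O}(n)$ rounds; in each of the $O(\log n/\log\log n)$ iterations, Bellman-Ford from $|Q_i| = O(n\log n/h_{i-1})$ sources (Lemma~\ref{lemma:Qj}) for $h_i = h_{i-1}\log n$ hops costs $O(|Q_i|\cdot h_i) = O(n\log^2 n)$ rounds, the neighbor-to-neighbor exchange of distance and parent values costs $O(|Q_i|)$ rounds, and the remaining per-node work is local; the closing broadcast of all $\delta_{MWC}(v)$ values costs $O(n)$ rounds by Lemma~\ref{lemma:broadcast}. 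Summing over the $\tilde{O}(1)$ iterations yields $\tilde{O}(n)$ rounds in total. Combining the three observations — deterministic, correct, and $\tilde{O}(n)$ rounds — gives the theorem.
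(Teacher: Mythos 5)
Your proposal is correct and follows essentially the same route as the paper, which derives the theorem directly by combining Lemma~\ref{lemma:algUndirected:correctness} (correctness of Algorithm~\ref{algUndirected}, resting on Lemma~\ref{lemma:undirected} and the critical-edge property) with Lemma~\ref{lemma:algUndirected:bound} (the $\tilde{O}(n)$ round bound). Your added remarks on determinism and on the role of the sets $\mathcal{A}^{h_i}_{q_i,v}$ are consistent with the paper's argument and introduce nothing divergent.
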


One could argue whether the approach discussed in this section can lead to an 
$\tilde{O}(n)$ rounds deterministic ANSC algorithm for undirected graphs.
In order to do so, one would need the exact shortest path distance values
from every node $v$ to the endpoints of the critical edge of its corresponding shortest
cycle (instead of the current Step~\ref{algUndirected:deltav} of 
Algorithm~\ref{algUndirected}).
This, however, will require an initial computation of APSP.
\section{Deterministic MSSP in Congest Model}    \label{sec:MSSP}

In this section, we describe how 
to compute multi-source shortest path distances using the deterministic
blocker set sequence algorithm described in Section~\ref{sec:blocker}. 
Our deterministic algorithm runs in $\tilde{O}(n)$ rounds when the 
source set $S$ has size at most $\sqrt{n}$.
This problem has been studied in other models of distributed 
computing, such as Congested Clique, in a recent work by 
Elkin and Neiman~\cite{E22}.
We now formally describe the problem statement below.

\begin{definition}[MSSP~\cite{E22}]
Let $G = (V, E)$ be a directed or undirected graph.
Let $S \subseteq V$ be a source set.
The MSSP problem deals with computing 
the shortest path distance values, $\delta(s, t)$ for each $s \in S$
at all nodes $t \in V$.
\end{definition}

Prior to this work, no near-linear rounds deterministic 
algorithm was known for this particular problem when 
$|S| = \Omega(n^{\epsilon})$ for any positive constant $\epsilon > 0$.
Algorithm~\ref{algMSSP} describes our deterministic algorithm for
computing the shortest path distance values from source nodes in set 
$S$ to all nodes $v \in V$.
In Step~\ref{algMSSP:blocker} we compute the deterministic blocker
set sequence $\{Q_i\}$ for graph $G$ using the algorithm described in
Section~\ref{sec:blocker}.
We then set the initial hop length $h_0$ to $\log^2 n$ and we define
the value of $h_i$ to $h_{i-1}\cdot \log n$ in Step~\ref{algMSSP:set}.

\begin{algorithm}
\caption{MSSP}
\begin{algorithmic}[1]
\Statex \textbf{Input:} \hspace{5pt} Directed or Undirected Graph $G = (V, E)$; 
\State Compute deterministic blocker set sequence $\{Q_i\}$ for $G$ using Algorithm~\ref{algBlocker}.   \label{algMSSP:blocker}
\State $h_0 \leftarrow \log^2 n$; \hspace{5pt} $h_i \leftarrow h_{i-1}\cdot \log n$.    \label{algMSSP:set}
\State Compute $\sqrt{n}$-hop out-SSSP for all source nodes $s \in S$ using Bellman-Ford algorithm~\cite{B58} (Algorithm~\ref{algBF}).   \label{algMSSP:sqrtnHop}
\State \textbf{Local Step at each node $\bm{v \in V}$:} Set $\delta(s, v) \leftarrow \delta^{\sqrt{n}}(s,v)$ for all $s \in S$. \label{algMSSP:localSet}
\State Let $i'$ be the max value of $i$ such that $h_i \leq \frac{\sqrt{n}}{2}$.  \label{algMSSP:max}
\For{$i' \leq i \leq \lceil \frac{\log n}{\log \log n} \rceil - 1$}  \label{algMSSP:startFor}
\State $\begin{aligned}[t] 
    & \text{Compute $h_i$-hop in-SSSP for all source nodes $q \in Q_i$ using Bellman-Ford algorithm~\cite{B58}} \\
    & \text{(Algorithm~\ref{algBFIncoming}).}
    \end{aligned}$ \label{algMSSP:inSSSP}
\State $\begin{aligned}[t] 
    & \text{Compute $h_i$-hop out-SSSP for all source nodes $q \in Q_i$ using Bellman-Ford algorithm~\cite{B58}} \\
    & \text{(Algorithm~\ref{algBF}).}
    \end{aligned}$ \label{algMSSP:outSSSP}
\State  $\begin{aligned}[t] 
    & \text{Each node $s \in S$  broadcasts the $h_i$-hop in-SSSP distance values, $\delta^{h_i}(s, q_i)$, for all} \\
    & \text{blocker nodes $q_i \in Q_i$.}
    \end{aligned}$ \label{algMSSP:broadcast}
\State \textbf{Local Step at each node $v \in V$:}  \label{algMSSP:local}
\State \hspace{10pt} \textbf{For each source $s \in S$:}
\State \hspace{20pt} $\delta(s,v) \leftarrow \min \{ \delta(s,v), \min_{q_i \in Q_i} \{\delta^{h_i}(s,q_i) + \delta^{h_i}(q_i, v)\}\}$  \label{algMSSP:delta}
\State $h_i \leftarrow h_{i-1}\cdot \log n$ \label{algMSSP:update}
\EndFor \label{algMSSP:endFor}
\end{algorithmic}  \label{algMSSP}
\end{algorithm} 

In Step~\ref{algMSSP:sqrtnHop} we compute the $\sqrt{n}$-hop outgoing
shortest path distance values from all source nodes $s \in S$ using
Bellman-Ford algorithm~\cite{B58} (Algorithm~\ref{algBF}).
Then in Step~\ref{algMSSP:localSet} every node $v \in V$ sets its
current estimate for source nodes $s \in S$, $\delta(s, v)$, to
$\delta^{\sqrt{n}}(s,v)$.

Since we have already computed shortest path distance values for all 
shortest paths from source set $S$ of hop-length at most $\sqrt{n}$,
we need to focus on computing shortest path distance values for paths
that have hop length greater than $\sqrt{n}$.
In order to do so, we compute the maximum value of $i$ such that 
the value of $h_i$ is at most $\frac{\sqrt{n}}{2}$, and we denote this value 
of $i$ as $i'$ (Step~\ref{algMSSP:max}.
The for loop in Steps~\ref{algMSSP:startFor}-\ref{algMSSP:endFor}
runs for $\lceil \frac{\log n}{\log \log n} \rceil - i'$
iterations in total.

Each iteration $i$ of the for loop in 
Steps~\ref{algMSSP:startFor}-\ref{algMSSP:endFor} proceeds in the
following manner: 
Step~\ref{algMSSP:inSSSP} computes the $h_i$-hop incoming shortest
path distance values, $\delta^{h_i}(v, q_i)$, from all blocker nodes
$q_i$ in the blocker set $Q_i$ using Bellman-Ford algorithm~\cite{B58} 
(also described in Algorithm~\ref{algBFIncoming}).
Similarly in Step~\ref{algMSSP:outSSSP} we compute the $h_i$-hop
outgoing shortest path distance values, $\delta^{h_i}(q_i, v)$, from
all blocker nodes $q_i$ in the blocker set $Q_i$ using Bellman-Ford \
algorithm~\cite{B58} (Algorithm~\ref{algBF}).

In Step~\ref{algMSSP:broadcast} every node $s \in S$ broadcasts the
$h_i$-hop incoming shortest path distance values, 
$\delta^{h_i}(s, q_i)$, to the entire network.
Using the received shortest path distance values, 
$\delta^{h_i}(s, q_i)$, for all source nodes $s \in S$ and 
blocker nodes $q_i \in Q_i$, in 
Steps~\ref{algMSSP:local}-\ref{algMSSP:delta} every node $v \in V$
then computes the candidate shortest path distance values from nodes 
in $S$ by summing up the distance values, $\delta^{h_i}(s, q_i)$, and
$\delta^{h_i}(q_i, v)$ for every pair $(s, q_i)$.
It then updates the current value for $\delta(s, v)$ for all 
source nodes $s$ using the equation in Step~\ref{algMSSP:delta}.

We now show that, for each source $s \in S$ and $v \in V$, 
Algorithm~\ref{algMSSP} correctly computes the shortest path distance
values, $\delta(s,v)$.
We break this into two parts, depending on the hop-length of the 
shortest path between nodes $s$ and $v$.
In Lemma~\ref{lemma:algMSSP:case1} we handle the case when the 
shortest path from $s$ to $v$ has hop length at most $\sqrt{n}$
and in Lemma~\ref{lemma:algMSSP:case2} we handle the case when 
all all shortest paths from $s$ to $v$ have hop-length 
greater than $\sqrt{n}$.

\begin{lemma}   \label{lemma:algMSSP:case1}
Let $G = (V, E)$ be a directed or undirected graph and let 
$S \subseteq V$ be a source set of $V$ of size at most $\sqrt{n}$.
Then for each source $s \in S$ and node $v \in V$ such that there 
exists a shortest path from $s$ to $v$ of 
hop-length at most $\sqrt{n}$, 
Algorithm~\ref{algMSSP} correctly computes the weight of the
shortest paths from $s$ to $v$.
\end{lemma}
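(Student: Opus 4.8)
The plan is to observe that the correct value for such a pair $(s,v)$ is already installed in Step~\ref{algMSSP:localSet}, and that nothing in the remainder of the algorithm can corrupt it. First I would invoke the correctness of the Bellman-Ford procedure from Section~\ref{sec:small} (Algorithm~\ref{algBF}): after running for $\sqrt{n}$ rounds from source $s$, the computed value $\delta^{\sqrt{n}}(s,v)$ is $\sqrt{n}$-hop accurate, i.e. $\delta^{\sqrt{n}}(s,v)\ge \delta(s,v)$, with equality whenever some shortest $s$-$v$ path uses at most $\sqrt{n}$ hops. By hypothesis such a path exists, so $\delta^{\sqrt{n}}(s,v)=\delta(s,v)$, and after Step~\ref{algMSSP:localSet} the local estimate at $v$ equals the true distance $\delta(s,v)$.

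Next I would argue that the for loop in Steps~\ref{algMSSP:startFor}--\ref{algMSSP:endFor} leaves this value unchanged. The only write to the estimate inside the loop is Step~\ref{algMSSP:delta}, which replaces it by $\min\{\delta(s,v),\ \min_{q_i \in Q_i}\{\delta^{h_i}(s,q_i) + \delta^{h_i}(q_i,v)\}\}$; this update is monotone non-increasing, so it suffices to show that every candidate $\delta^{h_i}(s,q_i)+\delta^{h_i}(q_i,v)$ is at least $\delta(s,v)$. This is immediate: $h$-hop distances upper-bound true distances, so $\delta^{h_i}(s,q_i)\ge\delta(s,q_i)$ and $\delta^{h_i}(q_i,v)\ge\delta(q_i,v)$, and then the triangle inequality gives $\delta(s,q_i)+\delta(q_i,v)\ge\delta(s,v)$. (Equivalently, any finite candidate sum corresponds to an actual $s$-$v$ walk, whose weight is at least $\delta(s,v)$.) Hence the estimate at $v$ stays equal to $\delta(s,v)$ through the end of the execution, which is exactly the claim.

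I do not anticipate a genuine obstacle here; the one point needing care is to avoid claiming that the candidate sums are \emph{exact} — that statement belongs to the companion long-hop case, Lemma~\ref{lemma:algMSSP:case2} — and instead to use only that they never undershoot $\delta(s,v)$, so that the correct value planted in Step~\ref{algMSSP:localSet} is preserved. Combining the two observations yields that Algorithm~\ref{algMSSP} outputs $\delta(s,v)$ for every pair $(s,v)$ admitting a shortest path of hop-length at most $\sqrt{n}$.
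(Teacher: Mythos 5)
Your proposal is correct and follows essentially the same route as the paper's proof: the initial $\sqrt{n}$-hop Bellman-Ford installs the true distance at Step~\ref{algMSSP:localSet} (since a shortest path with at most $\sqrt{n}$ hops exists), and the later updates in Step~\ref{algMSSP:delta} can only decrease the estimate, which is impossible because every finite candidate sum $\delta^{h_i}(s,q_i)+\delta^{h_i}(q_i,v)$ corresponds to an actual $s$-to-$v$ path and hence cannot undershoot $\delta(s,v)$. Your version is slightly more explicit than the paper's (invoking hop-accuracy and the triangle inequality where the paper argues by contradiction), but the substance is identical.
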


\begin{proof}
Fix a source node $s \in S$ and a node $v \in V$.
Here we need to show that the computed value for $\delta(s,v)$ is
equal to $\delta^{\sqrt{n}}(s,v)$.

Since we compute $\sqrt{n}$-hop shortest path distance values from 
all source nodes in $S$ using the Bellman-Ford algorithm in 
Step~\ref{algMSSP:sqrtnHop}, the value $\delta^{\sqrt{n}}(s,v)$
is set  for $\delta(s, v)$ in Step~\ref{algMSSP:localSet}.
Hence $\delta(s,v) \leq \delta^{\sqrt{n}}(s,v)$.

Now consider the case when $\delta(s,v) < \delta^{\sqrt{n}}(s,v)$,
then it implies that there exists an $i$ such that for some
$q_i \in Q_i$, the following holds:
$$\delta^{h_i}(s, q_i) + \delta^{h_i}(q_i, v) < \delta^{\sqrt{n}}(s,v)$$
which is a contradiction, as this implies that there exists a path
from  $s$ to $v$ through node $q_i$ that has weight less than
$\delta^{\sqrt{n}}(s,v)$.
This establishes the lemma.
\end{proof}

\begin{lemma}   \label{lemma:algMSSP:case2}
Let $G = (V, E)$ be a directed or undirected graph and let 
$S \subseteq V$ be a source set of $V$ of size at most $\sqrt{n}$.
Then for each source $s \in S$ and node $v \in V$ such that all 
shortest paths from $s$ to $v$ have 
hop-length greater than $\sqrt{n}$, 
Algorithm~\ref{algMSSP} correctly computes the
shortest path distance values, $\delta(s,v)$, from nodes $s$ to $v$.
\end{lemma}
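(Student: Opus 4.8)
The plan is to combine a monotone ``never underestimate'' invariant with the existential guarantee of Lemma~\ref{lemma:property:Qj}, and then to argue that the index furnished by that lemma necessarily lies inside the range $[i',\ \lceil \frac{\log n}{\log\log n}\rceil - 1]$ of iterations actually executed by the for loop in Steps~\ref{algMSSP:startFor}--\ref{algMSSP:endFor}. First I would dispose of the degenerate case in which $v$ is not reachable from $s$: then $\delta(s,v)=\infty$, and since every value ever assigned to the estimate maintained at $v$ for source $s$ is either $\delta^{\sqrt n}(s,v)$ (Step~\ref{algMSSP:localSet}) or a sum $\delta^{h_i}(s,q_i)+\delta^{h_i}(q_i,v)$ (Step~\ref{algMSSP:delta}), all of which equal $+\infty$ when no $s$--$v$ walk exists, the estimate stays $\infty$. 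So from now on assume a shortest path from $s$ to $v$ exists, and hence (by the hypothesis of the lemma) every such path has more than $\sqrt n$ hops.

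Next I would establish the invariant that, throughout the execution, the estimate $D(s,v)$ held at $v$ for source $s$ satisfies $D(s,v)\ge \delta(s,v)$. It holds right after Step~\ref{algMSSP:localSet} because $\delta^{\sqrt n}(s,v)\ge \delta(s,v)$, and it is preserved by each update in Step~\ref{algMSSP:delta}: since $\delta^{h_i}(s,q_i)\ge \delta(s,q_i)$ and $\delta^{h_i}(q_i,v)\ge \delta(q_i,v)$, their sum is at least $\delta(s,q_i)+\delta(q_i,v)\ge \delta(s,v)$ by the triangle inequality. Consequently it suffices to show the estimate eventually drops to $\delta(s,v)$.

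For the matching upper bound I would invoke Lemma~\ref{lemma:property:Qj} on the pair $(s,v)$: there exist an index $j$ and a blocker node $q_j\in Q_j$ with $\delta(s,v)=\delta^{h_j}(s,q_j)+\delta^{h_j}(q_j,v)$. Trivially $j\le \lceil \frac{\log n}{\log\log n}\rceil - 1$, since the sequence $\{Q_i\}$ has no later term. The crucial step, and the one I expect to be the real obstacle, is to show $j\ge i'$. Suppose not; then $j\le i'-1$ and $h_j\le h_{i'-1}=h_{i'}/\log n\le \sqrt n/(2\log n)$. The value $\delta^{h_j}(s,q_j)$ is realized by a walk from $s$ to $q_j$ of at most $h_j$ edges and $\delta^{h_j}(q_j,v)$ by a walk from $q_j$ to $v$ of at most $h_j$ edges; concatenating them yields a walk from $s$ to $v$ of weight $\delta(s,v)$ using at most $2h_j\le \sqrt n/\log n<\sqrt n$ edges. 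Since edge weights are non-negative, deleting the cycles from this walk produces a simple path from $s$ to $v$ of weight at most $\delta(s,v)$ — hence exactly $\delta(s,v)$ — and at most $2h_j<\sqrt n$ hops, contradicting the hypothesis that every shortest $s$--$v$ path has more than $\sqrt n$ hops. Therefore $i'\le j\le \lceil \frac{\log n}{\log\log n}\rceil - 1$.

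Finally, since $i'\le j\le \lceil \frac{\log n}{\log\log n}\rceil - 1$, the for loop does execute iteration $i=j$. In that iteration, Step~\ref{algMSSP:outSSSP} makes $\delta^{h_j}(q_j,v)$ available at $v$ and Step~\ref{algMSSP:broadcast} delivers $\delta^{h_j}(s,q_j)$ to $v$, so Step~\ref{algMSSP:delta} forces $D(s,v)\le \delta^{h_j}(s,q_j)+\delta^{h_j}(q_j,v)=\delta(s,v)$. Combined with the invariant $D(s,v)\ge \delta(s,v)$, this gives $D(s,v)=\delta(s,v)$ at termination, establishing the lemma. I would note in passing that the only genuinely non-routine ingredients are the walk-to-simple-path extraction (needed for both the invariant in the degenerate direction and the $j\ge i'$ claim) and the bookkeeping that ties the bound $h_{i'}\le \sqrt n/2$ to the hop-length threshold $\sqrt n$; everything else is a direct appeal to Lemma~\ref{lemma:property:Qj} and to the correctness of the Bellman--Ford computations in Steps~\ref{algMSSP:inSSSP}--\ref{algMSSP:outSSSP}.
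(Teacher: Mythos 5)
Your proposal is correct and follows essentially the same route as the paper: invoke Lemma~\ref{lemma:property:Qj} to get $j$ and $q_j$ with $\delta(s,v)=\delta^{h_j}(s,q_j)+\delta^{h_j}(q_j,v)$, use the hop-length hypothesis to force $j\ge i'$ (the paper states this as ``$h_j>\sqrt{n}/2$, hence $j\ge i'$''), and conclude via the min-update in Step~\ref{algMSSP:delta}. The only differences are presentational: you prove the matching lower bound through a ``never underestimate'' invariant with the triangle inequality (and handle the unreachable case explicitly), whereas the paper argues it by a two-case contradiction; the content is the same.
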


\begin{proof}
Fix a source node $s \in S$ and a node $v \in V$ such that all 
shortest paths from $s$ to $v$ have 
hop-length greater than $\sqrt{n}$.
By Lemma~\ref{lemma:property:Qj} there exists a $j$ such that 
$\delta(s,v) = \delta^{h_j}(s, q_j) + \delta^{h_j}(q_j, v)$.
Clearly $h_j > \frac{\sqrt{n}}{2}$, otherwise the hop-length of the 
path will be less than $\sqrt{n}$, and hence $j \geq i'$.
Here we need to show that the computed value for $\delta(s,v)$ is
equal to $\delta^{h_j}(s, q_j) + \delta^{h_j}(q_j, v)$.

Since we compute $h_j$-hop outgoing and incoming
shortest path distance values from 
all blocker nodes in $Q_j$ using the Bellman-Ford algorithm in 
Steps~\ref{algMSSP:inSSSP}-\ref{algMSSP:outSSSP}, 
the value $\delta^{h_j}(s, q_j) + \delta^{h_j}(q_j, v)$
is set  for $\delta(s, v)$ in Step~\ref{algMSSP:update}.
Hence $\delta(s,v) \leq \delta^{h_j}(s, q_j) + \delta^{h_j}(q_j, v)$.

Now consider the case when $\delta(s,v) < \delta^{h_j}(s, q_j) + \delta^{h_j}(q_j, v)$,
then it implies following two cases:
\newline

\textit{(i) that there exists an $i$ such that for some
$q_i \in Q_i$, the following holds:}
$$\delta^{h_i}(s, q_i) + \delta^{h_i}(q_i, v) < \delta^{h_j}(s, q_j) + \delta^{h_j}(q_j, v)$$

This implies that there is a shorter path from $s$ to $t$ through a 
blocker node $q_i \in Q_i$ for $i \neq j$, 
which is a contradiction as the path
from  $s$ to $v$ through node $q_j$ is the shortest path from $s$ to 
$v$.
\newline

\textit{(ii) $\delta^{\sqrt{n}}(s,v) < \delta^{h_j}(s, q_j) + \delta^{h_j}(q_j, v)$:}
This implies that there is a shorter path from $s$ to $t$ of
hop-length at most $\sqrt{n}$.
But this cannot happen since all shortest paths from $s$ to $v$
have hop-length greater than $\sqrt{n}$.
\end{proof}

In the following lemma, we show that Algorithm~\ref{algMSSP}
runs in $\tilde{O}(n)$ rounds.

\begin{lemma}   \label{lemma:MSSP:bound}
    Algorithm~\ref{algMSSP} computes all the $\delta(s,v)$ values
    deterministically in $\tilde{O}(n)$ rounds.
\end{lemma}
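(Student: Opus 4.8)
The plan is to account for the round complexity step by step, following the structure of Algorithm~\ref{algMSSP}, and to show that each contributes at most $\tilde{O}(n)$ rounds, with only $O(\log n / \log\log n)$ iterations of the main for loop. First I would dispatch the preprocessing: by Lemma~\ref{lemma:algBlocker:bound}, computing the blocker set sequence $\{Q_i\}$ in Step~\ref{algMSSP:blocker} takes $\tilde{O}(n)$ rounds deterministically, and Step~\ref{algMSSP:set} is a local step. Step~\ref{algMSSP:sqrtnHop} runs Bellman-Ford (Algorithm~\ref{algBF}) for $\sqrt{n}$ hops from each of the $|S| \le \sqrt{n}$ source nodes; since each source's Bellman-Ford messages can be pipelined, this takes $O(|S| \cdot \sqrt{n}) = O(n)$ rounds. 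Step~\ref{algMSSP:localSet} and Step~\ref{algMSSP:max} are local.

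Next I would bound a single iteration $i$ of the for loop in Steps~\ref{algMSSP:startFor}--\ref{algMSSP:endFor}. Since $i \ge i'$ we have $h_i \ge h_{i'} > \sqrt{n}/4$ or so, but more importantly, by Lemma~\ref{lemma:Qj} the blocker set $Q_i$ has size $O\!\left(\frac{n\log n}{h_{i-1}}\right)$. Hence Steps~\ref{algMSSP:inSSSP} and~\ref{algMSSP:outSSSP}, which run Bellman-Ford for $h_i$ hops from each node of $Q_i$, take $O(|Q_i| \cdot h_i) = O\!\left(\frac{n\log n}{h_{i-1}} \cdot h_i\right) = O(n \log^2 n)$ rounds, using $h_i = h_{i-1}\log n$. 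Step~\ref{algMSSP:broadcast} has each of the $|S| \le \sqrt{n}$ source nodes broadcast its $|Q_i|$ incoming distance values; the total number of messages is $K = |S|\cdot|Q_i| = O\!\left(\sqrt{n}\cdot \frac{n\log n}{h_{i-1}}\right)$, and since $h_{i-1} \ge h_0 = \log^2 n$ and in fact $h_{i-1} > \sqrt{n}/(4\log n)$ for $i \ge i'$, we get $K = O(n\log^2 n)$, so by Lemma~\ref{lemma:broadcast} this broadcast takes $O(K + n) = \tilde{O}(n)$ rounds. Steps~\ref{algMSSP:local}--\ref{algMSSP:delta} and Step~\ref{algMSSP:update} are local.

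Finally, summing over the at most $\lceil \log n / \log\log n \rceil - i' = O(\log n / \log\log n)$ iterations, each costing $\tilde{O}(n)$ rounds, together with the $\tilde{O}(n)$ preprocessing, gives a total of $\tilde{O}(n)$ rounds, establishing the lemma. The main obstacle I anticipate is the bound on the broadcast in Step~\ref{algMSSP:broadcast}: one must verify that $|S|\cdot|Q_i|$ does not blow up, which is exactly why the loop only starts at $i'$ (where $h_{i-1}$ is already $\Omega(\sqrt{n}/\log n)$, forcing $|Q_i| = O(\sqrt{n}\log^2 n)$ and hence $|S|\cdot|Q_i| = O(n\log^2 n)$); for the earlier, small-hop shortest paths one instead relies on the direct $\sqrt{n}$-hop Bellman-Ford of Step~\ref{algMSSP:sqrtnHop}, whose cost $O(|S|\sqrt{n}) = O(n)$ crucially uses the hypothesis $|S| \le \sqrt{n}$.
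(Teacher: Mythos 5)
Your proposal is correct and follows essentially the same accounting as the paper's proof: $\tilde{O}(n)$ for the blocker-set sequence, $O(|S|\cdot\sqrt{n})=O(n)$ for the initial $\sqrt{n}$-hop Bellman--Ford, $O(|Q_i|\cdot h_i)=O(n\log^2 n)$ per iteration via Lemma~\ref{lemma:Qj}, a per-iteration broadcast cost of $O(|S|\cdot|Q_i|)=\tilde{O}(n)$ using the fact that the loop starts at $i'$ (the paper bounds this as $O(n\log n)$, you as $O(n\log^2 n)$; the polylog discrepancy is immaterial), and $O(\log n/\log\log n)$ iterations overall.
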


\begin{proof}
    Step~\ref{algMSSP:blocker} takes $\tilde{O}(n)$ rounds using
    Lemma~\ref{lemma:algBlocker:bound}.
    Steps~\ref{algMSSP:set} and \ref{algMSSP:localSet} are 
    local steps.

    Using Bellman-Ford algorithm~\cite{B58}, 
    Step~\ref{algMSSP:sqrtnHop} takes $O(\sqrt{n}\cdot \sqrt{n})$
    $= O(n)$ rounds.
    We now show that each iteration of the for loop in 
    Steps~\ref{algMSSP:startFor}-\ref{algMSSP:endFor} takes 
    $\tilde{O}(n)$ rounds.

    Steps~\ref{algMSSP:inSSSP} and \ref{algMSSP:outSSSP} takes
    $O(|Q_i|\cdot h_i) = O(\frac{n\log n}{h_{i-1}} \cdot h_i)$
    $= O(n\log^2 n)$ (since $|Q_i| = O(\frac{n\log n}{h_{i-1}})$
    using Lemma~\ref{lemma:Qj}) rounds.
    Step~\ref{algMSSP:broadcast} that involves all source nodes
    $s \in S$ to send distance values, $\delta^{h_i}(s,q_i)$, to
    the entire network takes $O(|S|\cdot |Q_i|)$ 
    $ = O(\sqrt{n}\cdot \frac{n\log n}{h_{i-1}})$
    $= O(\sqrt{n} \cdot \frac{n\log n}{\sqrt{n}/2}) $
    $= O(n\log n)$ rounds.
    Steps~\ref{algMSSP:local}-\ref{algMSSP:endFor} are local steps
    and hence do not involve any communication.
    Since there are in total $O(\frac{\log n}{\log \log n})$
    iterations of the for loop, the entire execution of the 
    algorithm takes $\tilde{O}(n)$ rounds in total.
\end{proof}

Lemmas~\ref{lemma:algMSSP:case1}-\ref{lemma:MSSP:bound} 
together lead to the following theorem.

\begin{theorem}
There is a deterministic distributed algorithm that computes 
MSSP on an $n$-node directed or
undirected graph with arbitrary non-negative edge weights in 
$\tilde{O}(n)$ rounds, given that the size of source set is 
at most $\sqrt{n}$.
\end{theorem}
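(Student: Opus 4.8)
The plan is to exhibit Algorithm~\ref{algMSSP} as the desired algorithm and to argue its correctness and its $\tilde O(n)$ round complexity separately, combining the two at the end. Correctness will be proved by a case split on the hop-length of a shortest $s$-$v$ path; the round bound will be a step-by-step accounting of Bellman--Ford costs and broadcast costs.

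For correctness, I would fix a source $s \in S$ and a target $v \in V$ for which a shortest path exists. If some shortest $s$-$v$ path has at most $\sqrt n$ hops, then Step~\ref{algMSSP:sqrtnHop} computes $\delta^{\sqrt n}(s,v) = \delta(s,v)$ by Bellman--Ford and Step~\ref{algMSSP:localSet} records it, while every later update in Step~\ref{algMSSP:delta} only combines two $h_i$-hop shortest-path distances through a common blocker node $q_i$ and hence corresponds to an actual $s$-$v$ walk, so it cannot undercut $\delta(s,v)$; this is exactly Lemma~\ref{lemma:algMSSP:case1}. If instead all shortest $s$-$v$ paths have more than $\sqrt n$ hops, I would invoke Lemma~\ref{lemma:property:Qj} to obtain an index $j$ and a node $q_j \in Q_j$ with $\delta(s,v) = \delta^{h_j}(s,q_j) + \delta^{h_j}(q_j,v)$; since any path realizing this sum has more than $\sqrt n$ hops we must have $h_j > \sqrt n/2$, so $j \ge i'$, iteration $j$ of the loop actually runs, and Step~\ref{algMSSP:delta} sets $\delta(s,v)$ to this value, with no other combination able to go below it; this is Lemma~\ref{lemma:algMSSP:case2}. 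The two cases together cover all relevant pairs $(s,v)$.

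For the round bound I would account for each line. Step~\ref{algMSSP:blocker} costs $\tilde O(n)$ by Lemma~\ref{lemma:algBlocker:bound}; Steps~\ref{algMSSP:set}, \ref{algMSSP:localSet}, and \ref{algMSSP:max}, as well as the local steps \ref{algMSSP:local}--\ref{algMSSP:update}, involve no communication. Step~\ref{algMSSP:sqrtnHop} runs $|S| \le \sqrt n$ instances of Bellman--Ford for $\sqrt n$ hops, hence $O(\sqrt n \cdot \sqrt n) = O(n)$ rounds. Inside the loop, Steps~\ref{algMSSP:inSSSP}--\ref{algMSSP:outSSSP} cost $O(|Q_i|\cdot h_i) = O(n\log^2 n)$ using $|Q_i| = O(n\log n/h_{i-1})$ (Lemma~\ref{lemma:Qj}) and $h_i = h_{i-1}\log n$, and Step~\ref{algMSSP:broadcast} broadcasts $|S|\cdot|Q_i|$ distance values, costing $O\!\big(\sqrt n\cdot \tfrac{n\log n}{h_{i-1}}\big) = O(n\log n)$ rounds because the loop is restricted to $i \ge i'$, so $h_{i-1} > \sqrt n/2$. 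With $O(\log n/\log\log n)$ iterations, the loop and hence the whole algorithm run in $\tilde O(n)$ rounds; this is Lemma~\ref{lemma:MSSP:bound}. The three lemmas then yield the theorem.

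The main obstacle — and the reason for the $|S| \le \sqrt n$ hypothesis — is controlling Step~\ref{algMSSP:broadcast}: broadcasting all $\delta^{h_i}(s,q_i)$ values costs $\Theta(|S|\cdot|Q_i|)$, which is affordable only when both $|S|$ and $|Q_i|$ are $\tilde O(\sqrt n)$. The threshold $i'$ fixed in Step~\ref{algMSSP:max} is precisely what ensures $|Q_i| = O(\sqrt n\log n)$ in every iteration that executes, while Step~\ref{algMSSP:sqrtnHop} absorbs all the short-hop ($\le \sqrt n$) paths that the restricted loop would otherwise miss. Everything else is routine bookkeeping of Bellman--Ford costs and of the number of blocker-set iterations.
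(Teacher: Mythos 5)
Your proposal is correct and follows essentially the same route as the paper: the identical case split on hop-length (Lemmas~\ref{lemma:algMSSP:case1} and~\ref{lemma:algMSSP:case2}, the latter via Lemma~\ref{lemma:property:Qj} and the observation that $h_j > \sqrt n/2$ forces $j \ge i'$) and the identical round accounting of Lemma~\ref{lemma:MSSP:bound}, including the role of $i'$ and $|S| \le \sqrt n$ in keeping the broadcast of Step~\ref{algMSSP:broadcast} at $O(n\log n)$ rounds. There is nothing to add or correct.
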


\bibliographystyle{abbrv}
\bibliography{main}

\end{document}